\newtheoremstyle{base}
    {}
    {}
    {}
    {}
    {}
    {}
    { }
    {\textbf{\thmname{#1}\thmnumber{ #2}.\thmnote{ (#3)}}}
\theoremstyle{base}
\newtheorem{theorem}{Theorem}[section]
\newtheorem{proposition}[theorem]{Proposition}
\theoremstyle{remark}
\newtheorem{remark}[theorem]{Remark}
\newsavebox{\@brx}
\newcommand{\llangle}[1][]{\savebox{\@brx}{\(\m@th{#1\langle}\)}%
    \mathopen{\copy\@brx\kern-0.5\wd\@brx\usebox{\@brx}}}
\newcommand{\rrangle}[1][]{\savebox{\@brx}{\(\m@th{#1\rangle}\)}%
    \mathclose{\copy\@brx\kern-0.5\wd\@brx\usebox{\@brx}}}
\newcommand{\R}{\mathbb{R}}
\newcommand{\alg}{\mathfrak{g}}
\newcommand{\T}{\mathsf{T}}
\newcommand{\Ver}{\mathsf{V}}
\newcommand{\Hor}{\mathsf{H}}
\newcommand{\X}{\mathfrak{X}}
\newcommand{\V}{\mathcal{V}}
\newcommand{\dif}{\mathrm{d}}
\newcommand{\der}{\frac{\dif}{\dif t}\at[\Big]{t=0}}
\newcommand{\id}{\mathrm{id}}
\newcommand{\pr}{\mathrm{pr}}
\newcommand{\adP}{{\mathrm{ad}P}}
\newcommand{\0}{\textbf{0}}
\newcommand{\AdG}{\mathrm{Ad}}
\newcommand{\ad}{\mathrm{ad}}
\newcommand{\Ker}{\mathrm{Ker}}
\newcommand{\fr}{{P_{S^2}}}
\newcommand{\quo}{{\tilde{\delta}_{A(P)}}}
\newcommand{\quoD}{{\tilde{\delta}_{A^*(P)}}}
\newcommand{\spacesmall}{2mm}
\newcommand{\spacemid}{5mm}
\def\sT{\mathsf{\T}}
\newcommand{\at}[2][]{#1|_{#2}}
\title{Tulczyjew triple on the Atiyah algebroid with connection}
\author{Katarzyna Grabowska, Paweł Korzeb\footnote{Currently a student at the Ludwig Maximilian University of Munich.}, Kuba Krawczyk\footnote{Currently a student at the University of Amsterdam, Institute of Physics.} \\ \\
{} {\it Faculty of Physics}\\
      {\it University of Warsaw}}
\date{\today}
\begin{document}

\maketitle

\begin{abstract}
    \noindent
    The Tulczyjew triple on a principal bundle with connection is constructed in a convenient trivialisation. A reduction by the structure group is performed leading to the triple on the trivialised Atiyah algebroid and a presentation of this algebroid via a double vector bundle morphism. The dynamics of physical systems with configuration manifolds having the structure of a principal bundle with connection or the Atiyah algebroid is discussed and applied to the example of an axially symmetric body confined to a sphere.
    
    \noindent {\bf Keywords:} Hamiltonian mechanics, Lagrangian mechanics, Tulczyjew triple, Ehresmann connection Lie Algebroid
    
    \noindent {\bf MSC:} 22E70, 53D05, 53Z05, 70E17, 70H33.
\end{abstract}

\section{Introduction}

Lie algebroid is a geometric object very extensively described in the literature. First, it appeared in the 1967 works of Pradines \cite{Pradines67} as an infinitesimal object for a Lie groupoid. In 1996, Weinstein \cite{Weinstein96} and Liebermann \cite{Liebermann96} used both structures -- Lie groupoids and Lie algebroids -- in analytical mechanics. The structure of Lie algebroid in Lagrangian mechanics plays a similar role as symplectic or Poisson structure in Hamiltonian mechanics, i.e. it provides tools for generating relevant equations from a Lagrangian or, more generally, from a Lagrangian generating object.

The Tulczyjew triple is a structure used in geometric analytical mechanics introduced by Tulczyjew in his various works (ex. \cite{WMT76a, WMT76b, Tulczyjew}). It is a commutative diagram consisting of maps essential in both Lagrangian and Hamiltonian descriptions of physical systems.

In \cite{Grabowski99}, Grabowski and Urbański showed that a standard Lie algebroid structure, or its generalised version ({\sl general algebroid}), is equivalently encoded in a certain double vector bundle morphism. Together with Grabowska, they showed \cite{Grabowska06} that mechanics on Lie algebroids may be described by the Tulczyjew triple, and in that case its defining morphisms encode the base algebroid.

In this paper, we closely examine the case where the base configuration manifold is the well-known {\sl Atiyah Lie algebroid}, which additionally enjoys the presence of a connection on its principal bundle.

The paper is organised as follows. In Sections 2.1 and 2.2, we recall some necessary elements of the general theory of the Tulczyjew triple and Lie algebroids. In Section 2.3, we briefly present the construction and reduction of the Tulczyjew triple on a Lie group (see \cite{triple_lie_group}), which is instructive for the rest of this work. In Section 3, we introduce some basic notions related to principal bundles and connection theory. Then, Section 4 contains the first part of the results of this paper. There, we construct the Tulczyjew triple on a principal bundle and offer its convenient trivialisation when a connection is present. We end the section by calculating the trivialised dynamics for a configuration manifold being a principal bundle with connection, and we apply it to a physical example -- an axially symmetric body confined to a sphere. In Section 5, we reduce all the previous constructions by the action of the tangent group to obtain the, conveniently trivialised, Tulczyjew triple on the Atiyah algebroid, including the morphism encoding its Lie structure and the dynamics of systems modelled by it.

Part of the results presented in Section 4 have already been shown by Esen, Kudeyt and Sütlü \cite{Esen}, but they used different methods and then discussed a completely separate reduction by the action of the structure group (not the tangent of the group) and hence made no connection to the Atiyah algebroid.

\section{Tulczyjew triple and algebroids}

The original Tulczyjew triple is a diagram containing all the information necessary for deriving dynamics both from Lagrangian or Hamiltonian generating objects and understanding the nature of the Legendre transformation. For $M$ being a configuration manifold, the classical form of the Tulczyjew triple is as follows:
\begin{equation}\label{e:1}
\xymatrix@C-15pt{
\sT^\ast\sT^\ast M\ar[dr]_{\pi_{\sT^\ast M}} & & \sT\sT^\ast M\ar[rr]^{\alpha_M}\ar[ll]_{\beta_M}\ar[dr]^{\sT\pi_M}\ar[dl]_{\tau_{\sT^\ast M}} & & \sT^\ast\sT M \ar[dl]^{\pi_{\sT M}} \\
& \sT^\ast M\ar[dr]^{\pi_M} & & \sT M\ar[dl]_{\tau_M} &  \\
& & M & & .
}
\end{equation}
The map $\beta_M$ is associated with the canonical symplectic form $\omega_{M}$ on $\sT^\ast M$ -- it is just a contraction of a vector from $\sT\sT^\ast M$ with this form. The map $\alpha_M$ is the Tulczyjew isomorphism introduced in \cite{WMT76b}. This simple diagram works for autonomous mechanics, i.e. when the Lagrangian or Hamiltonian does not depend explicitly on time. It can be used also in relativistic mechanics where $M$ is the Minkowski space-time, and there is no distinguished time.

The three top bundles in the Tulczyjew triple have rich geometric structures. Each
of them is a double vector bundle, i.e. it has two compatible vector bundle structures (see
\cite{higher_vector_bundles, double_bundles_duality}). We also note that having one of the maps, we may canonically obtain the other one via Dufour's canonical isomorphism \cite{canonical_gamma} as $\beta_M = \gamma_{\T M}^{-1} \circ \alpha_M$.

In the following two sections, we shall present shortly the mechanical aspects of the triple and its mathematical significance in the context of algebroids. We shall finish this section with an instructive example -- the Tulczyjew triple in the case where the base manifold is a Lie group, and we use the left translation to trivialise tangent and cotangent iterated bundles.

\subsection{Tulczyjew triple in mechanics}

Tulczyjew's approach to mechanics, based on a profound analysis of the variational approach to physical theories \cite{Tulczyjew}, has proven very fruitful. The Tulczyjew triple is only one of the results of this analysis. Originally it was designed to make sense of the Legendre transformation for Lagrangian systems with singular Lagrangians. However, it turned out that the triple can be easily modified to different physical situations. There exists the Tulczyjew triple for time-dependent mechanics in non-relativistic cases, i.e. when Lagrangian or Hamiltonian depend explicitly on time. There exists also the Tulczyjew triple formulated in the language of the geometry of affine values, which can be used, for example, to describe mechanical systems in Newtonian space-time in a way which is independent of the choice of a particular observer. Tulczyjew's approach serves well also in the case of constrained systems and systems with symmetry.

In its infinitesimal version, a mechanical system in Tulczyjew's vision is described by phase equations called {\sl the dynamics}. The dynamics is a subset of vectors tangent to the phase space. In many cases, this subset is an image of a vector field, but there are other situations, e.g. free relativistic particle where implicit equations are needed (other examples can be found in \cite{slowandcareful}).

To illustrate the use of the Tulczyjew triple, we shall use the classical diagram (\ref{e:1}). Let $M$ be the manifold of positions of the mechanical system we describe. In autonomous mechanics, a Lagrangian is a function depending on positions and velocities, i.e. it is a function on the tangent bundle $L:\sT M\rightarrow\R$. The image of the differential $\dif L(\sT M)$ is a Lagrangian submanifold in $\sT^\ast\sT M$, and the dynamics $\mathcal{D}$ is given by the formula
$$\mathcal{D}=\alpha_M^{-1}(\dif L(\sT M)).$$
In local coordinates $(q^i)$ on $M$, $(q^i, p_j)$ on the phase space $\sT^\ast M$ and $(q^i, p_j, \dot q^k, \dot p_l)$ on $\sT\sT^\ast M$, the equations for $\mathcal{D}$ read
$$p_i=\frac{\partial L}{\partial \dot q^i}, \qquad \dot p_j=\frac{\partial L}{\partial q_j}.$$
It is easy to see that the Euler-Lagrange equation with external forces equal to zero is a differential consequence of the dynamics. The dynamics contains then the definition of momenta, as well as, the equations of motion. In more complicated physical situations, the Lagrangian submanifold  $\dif L(\sT M)$ can be replaced by a Lagrangian submanifold generated by a more complicated generating object, for example, a function on a submanifold (e.g. for vaconomic mechanics) or a generating  family (e.g. for massless particles in relativistic setting). The full diagram of the Lagrangian side of the Tulczyjew triple, that includes the double vector bundle structure of iterated tangent and cotangent bundles as well as the Legendre map $\lambda_L$, reads
\begin{equation*}\label{e:2}
\xymatrix@C-15pt@R-10pt{
{\mathcal{D}}\ar@{ (->}[r]& \sT\sT^\ast M \ar[rrr]^{\alpha_M} \ar[dr]\ar[ddl]
 & & & \sT^\ast\sT M\ar[dr]_{\pi_{\sT M}}\ar[ddl] & \\
 & & \sT M\ar@{.}[rrr]\ar@{.}[ddl]
 & & & \sT M \ar@{.}[ddl]\ar@/_1pc/[ul]_{d L}\ar[dll]_{\lambda_L}\\
 \sT^\ast M\ar@{.}[rrr]\ar@{.}[dr]
 & & & \sT^\ast M\ar@{.}[dr] & &  \\
 & M\ar@{.}[rrr]& & & M & .
}
\end{equation*}

The Hamiltonian side of the Tulczyjew triple is associated with the symplectic form $\omega_M$ and the map $\beta_M\sT\sT^\ast M\rightarrow \sT^\ast\sT^\ast M$, which is defined as  $\beta_M(v)=\omega_M(v,\cdot)$. The formula for generating the dynamics from a Hamiltonian reads
$$\mathcal{D}=\beta_M^{-1}(\dif H(\sT^\ast M)).$$
This formula is just another way of writing the definition of the Hamiltonian vector field for a function $H:\sT^\ast M\rightarrow \R$. This is evident when we write the equations defining $\mathcal D$ in coordinates:
$$\dot q^i=\frac{\partial H}{\partial p_i}, \qquad \dot p_j=-\frac{\partial H}{\partial q^j}.$$
The Lagrangian submanifold $\dif H(\sT^\ast M)$ can, however, be replaced by a Lagrangian submanifold generated by another Hamiltonian generating object, as it is, for example, in the case of a free relativistic particle. Using Lagrangian submanifolds instead of vector fields and generating families instead of just functions allows for both Lagrangian and Hamiltonian description of most mechanical systems. It works even for those with singular Lagrangians or no Lagrangians at all, as a massless particles. The full diagram of the Hamiltonian side of the Tulczyjew triple, that includes the double vector bundle structure of iterated tangent and cotangent bundles, reads
\begin{equation*}\label{e:3}
\xymatrix@C-15pt@R-10pt{
 & \sT^\ast\sT^\ast M  \ar[dr] \ar[ddl]
 & & & \sT\sT^\ast M\ar[dr]\ar[ddl] \ar[lll]_{\beta_M}&
 { \mathcal{D_H}}\ar@{ (->}[l] \\
 & & \sT M\ar@{.}[rrr]\ar@{.}[ddl]
 & & & \sT M \ar@{.}[ddl]\\
 \sT^\ast M\ar@{.}[rrr]\ar@{.}[dr] \ar@/^1pc/[uur]^{d H}
 & & & \sT^\ast M\ar@{.}[dr] & &  \\
 & M\ar@{.}[rrr]& & & M & .
}
\end{equation*}

\subsection{Algebroids via Tulczyjew triple morphisms}
\label{sec:algebroids_and_triples}

According to the most widely-used definition, a {\it Lie algebroid} is a vector bundle $\tau:E\rightarrow M$ together with a Lie bracket on sections of the bundle and a vector bundle morphism $\rho: E\rightarrow\sT M$, called {\sl the anchor}, satisfying the condition
$$[X,fY]=f[X,Y]+\rho(X)(f)Y$$
for all sections $X,Y$ of the bundle $\tau$ and all smooth functions $f$ on $M$. The first canonical example of a Lie algebroid is the tangent bundle $\sT M$ with the Lie bracket of vector fields and identity as the anchor. Another example is a Lie algebra $\alg$ over a one-point manifold with the Lie algebra bracket and the anchor that maps every element of $\alg$ to the zero vector. The Atiyah algebroid of a principal bundle that will be discussed in detail later is another canonical example of the structure.

It is well known that the presence of a Lie algebroid structure on $\tau: E\rightarrow M$ is equivalent to the existence of a linear Poisson structure on the dual bundle $\pi: E\rightarrow M$. Then, it turns out that every linear Poisson structure $\Lambda$ on $E$ may be written as a unique double vector bundle morphism $\varepsilon : \T^* E \rightarrow \T E^*$ over the identity on $E^\ast$ with the following diagram
\begin{equation*}\begin{gathered}
    \label{xy:lie_algebroid}
    \xymatrix@C-15pt@R-10pt{
         & \T^*E \ar[rrr]^{\varepsilon} \ar[dl]_{\pi_{E}} \ar[ddr]^(0.3){\xi_{E^*}} & & & \T E^*  \ar[dl]_{\T\pi_E} \ar[ddr]^{\tau_{E^*}} & \\
        E \ar[rrr]^(0.65){\rho_E} \ar[ddr] & & & \T M \ar[ddr] & & \\
         & & E^* \ar@{=}[rrr] \ar[dl] & & & E^* \ar[dl] \\
         & M \ar@{=}[rrr] & & & M &
    }
\end{gathered}\end{equation*}
via the canonical Dufour's isomorphism as $\tilde{\Lambda} = \varepsilon \circ \gamma_E$. In the case of the canonical algebroid $E=\sT M$, this Poisson structure is just given by the canonical symplectic form. The corresponding double vector bundle morphism is the inverse of the Tulczyjew isomorphism $\varepsilon = \alpha_M^{-1}$. Further study of systems with symmetry \cite{Grabowska06} led to the conclusion that the Tulczyjew triple is not only a useful tool in classical mechanics but carries a lot of information on the structure of the three bundles that compose it. In particular, reductions by the symmetric degrees of freedom allow us to construct the double vector bundle morphism descriptions of other algebroids.

\subsection{Tulczyjew triple on a Lie group}

In this section, we briefly present the Tulczyjew triple on a Lie group and its reduction following Zając and Grabowska \cite{triple_lie_group}. It serves a dual purpose. First, it is an example of a reduction mentioned above. Second, the construction of the triple on a principal bundle carried out in Section 4 may be thought of as a generalisation of the Lie group case in an obvious way.

The construction is based on the standard Lie group isomorphism $\T G \simeq G \times \alg : (g, v_g) \mapsto (g, \T l_{g^{-1}}(v_g))$, where $\alg$ is the Lie algebra of $G$ and $l_g(x) = gx$. It allows us to trivialise all iterated tangent and cotangent bundles as further products of $\alg$ and its dual $\alg^*$. The Tulczyjew triple takes then the trivialised form
\begin{equation*}\label{eq:triple_lie_group}
\xymatrix@C-15pt{
G \times \alg^* \times \alg^* \times \alg \ar[dr]_{(\pr_1, \pr_2)} & & G \times \alg^* \times \alg \times \alg^* \ar[rr]^{\tilde{\alpha}_G}\ar[ll]_{\tilde{\beta}_G}\ar[dr]^{(\pr_1, \pr_3)}\ar[dl]_{(\pr_1, \pr_2)} & & G \times \alg \times \alg^* \times \alg^* \ar[dl]^{(\pr_1, \pr_2)} \\
& G \times \alg^* \ar[dr]^{\pr_1} & & G \times \alg\ar[dl]_{\pr_1} &  \\
& & G & & ,
}
\end{equation*}
where $\tilde{\alpha}_G(g, A, X, B) = (g, X, B - \ad_X^*(A), A)$ and $\tilde{\beta}_G(g, A, X, B) = (g, A, -B + \ad_X^*(A), X)$.

The left group action on itself lifts naturally to the tangent bundle $\T G \simeq G \times \alg$, and one can now consider the quotient space $\T G /_G \simeq \alg$ as the base of the triple. Using the induced actions on iterated tangent and cotangent bundles, we get the reduced Tulczyjew triple on $G$:
\begin{equation*}\label{eq:triple_lie_group_reduced}
\xymatrix@C-10pt{
\alg^* \times \alg \ar[dr]_{\pr_1} & & \alg^* \times \alg^* \ar[rr]^{\tilde{\alpha}_\alg}\ar[ll]_{\tilde{\beta}_\alg}\ar[dl]^{\pr_1} & & \alg \times \alg^* \ar[dl]^{\pr_1} \\
& \alg^* & & \alg & .
}
\end{equation*}
$\tilde{\alpha}_\alg$, recovering the full double vector bundle structure, describes the Lie algebroid structure of the Lie algebra $\alg$:
\begin{equation*}\begin{gathered}
    \xymatrix@C-15pt@R-10pt{
         & \alg \times \alg^* \ar[rrr]^{\varepsilon = \tilde{\alpha}_\alg^{-1}} \ar[dl]_{\pr_1} \ar[ddr]^(0.3){\pr_2} & & & \alg^* \times \alg^*  \ar[dl] \ar[ddr]^{\pr_1} & \\
        \alg \ar[rrr]_(0.35){\rho_\alg = 0} \ar[ddr] & & & \0 \ar[ddr] & & \\
         & & \alg^* \ar@{=}[rrr] \ar[dl] & & & \alg^* \ar[dl] \\
         & \{\bullet\} \ar@{=}[rrr] & & & \{\bullet\} & .
    }
\end{gathered}\end{equation*}

This paper aims to perform a similar construction and reduction on the slightly richer structure of a principal bundle with connection and, as a result, obtain the explicit form of the double vector bundle morphism encoding the trivialised Atiyah algebroid.

\section{Principal bundles and connections}

In this section, we briefly recall the geometric notions of a principal bundle and a connection, which will serve us as the base manifold of the Tulczyjew triple.

A \textit{principal bundle} is a differential manifold $P$ with a Lie group $G$, $n := \dim G$, acting on it freely and properly from the right $P\times G\ni (p,g)\mapsto \wp_g(p) \equiv pg\in P$. Properties of the action ensure that the space of its orbits is also a differential manifold, which we often denote with $M$, $m :=\mathrm{dim} M$. Therefore, the principal bundle is a fibre bundle $(P, M, G, \pi)$.

The \textit{vertical bundle} $(\Ver P, P, \R^n, \pi_{\Ver P})$ of $P$ is a bundle, in which the fibre over any point $p \in P$ consists of the \textit{vertical vectors}, i.e. elements of $\T_p P$ belonging to the kernel of $\T \pi$.

We will also use the \textit{adjoint bundle} $(\adP, M, \alg, \pi_\adP)$ of a principal bundle $P$, i.e. the bundle associated with $P$ via the adjoint representation $\AdG$ of $G$ on its Lie algebra $\alg$. One may obtain the adjoint bundle by dividing the vertical bundle $\Ver P$ by $G$. Additionally, we introduce a bracket on $\adP$ inherited from the Lie bracket of $\alg$: $\big[[(p, X)], [(p, Y)]\big]_\adP := [(p, [X, Y]_\alg)]$.

An \textit{Ehresmann connection} $(\Hor P, P, \R^m, \pi_{\Hor P})$ is a vector subbundle of the tangent bundle of $P$ complementary to the vertical subbundle $\Ver P$ over every point of the base, i.e. $\T P \simeq \Hor P \oplus_P \Ver P$. Additionally, the connection on a principal bundle is \textit{principal} if the following condition is satisfied:
\begin{equation*}
    \label{eq:ehr_pr_con}
    \forall \, p \in P, \, g \in G \qquad \T_p\wp_g(\Hor_p P) = \Hor_{pg}P.
\end{equation*}
In the next sections, we will assume all connections are principal.

Any connection gives rise to a projection onto $\Hor P$ along $\Ver P$, which we denote with $\pr_\Hor : \T P \longrightarrow \Hor P$, and a collection of maps $\T_x M\rightarrow \sT_p P$ for all pairs $(x,p)$ such that $x=\pi(p)$, called the \textit{horizontal lift} $\T_{\pi(p)} M \ni v \mapsto v_p^H \in \Hor_p P \subset \T_p P$ providing an isomorphism $\Hor_pP \simeq \T_{\pi(p)} M$. Therefore, there is an identification
\begin{equation}
    \label{eq:iso_horizontal}
    \Hor P \simeq P \times_M \T M.
\end{equation}

If the bundle is principal, the fibres are isomorphic to the structure Lie group. The $G$-equivariant \textit{fundamental vector fields} $\sigma(X)_{p}:=\der \, (pe^{tX})$ induce, via the mapping $P \times \alg \ni (p, X) \mapsto \sigma(X)_p \in \Ver P$, a trivialisation
\begin{equation}
    \label{eq:iso_vertical}
    \Ver P \simeq P \times \mathfrak{g}.
\end{equation}
This observation inspires an equivalent way of defining a connection for principal bundles. A \textit{connection one-form} on a principal bundle $P$ is a $\mathfrak{g}$-valued one-form $\omega\in\Omega(P,\mathfrak{g})$, which satisfies the condition
\begin{equation*}
    \label{eq:pr_con_form_1}
    \omega(\sigma(X)) = X,
\end{equation*}
and is $G$-equivariant:
\begin{equation*}
    \label{eq:pr_con_1form}
    {\wp_g}^*\omega = \AdG_{g^{-1}} \circ \omega.
\end{equation*}
A principal Ehresmann connection defines the connection one-form $\omega := (\sigma_p)^{-1} \circ (\id_{\T P} - \pr_\Hor)$, and a connection one-form defines the Ehresmann connection $\Hor P := \ker \omega$.

Any principal connection has the associated \textit{curvature two-form} $\Omega$ given by
\begin{equation*}
    \Omega :=\pr_\Hor^*\dif\omega \equiv \dif\omega \circ (\pr_\Hor \times \pr_\Hor) \in \Omega^{2}(P,\mathfrak{g}).
\end{equation*}
This two-form may be also expressed via:
\begin{itemize}
    \item when $V, W \in \X(P)$
        \begin{equation} \label{alternative_omega}
            \Omega(V, W) = -\omega([\pr_\Hor V, \pr_\Hor W]),
        \end{equation}
    \item ({\sl the structure equation})
        \begin{equation*} \label{structure_equation}
            \Omega(V,W)=\dif\omega(V,W)+[\omega(V),\omega(W)].
        \end{equation*}
\end{itemize}

\begin{remark}
    Because the value of $\Omega(V, W)$ depends only on the horizontal parts of $V, W \in \X(P)$, we often abuse the notation by writing $\Omega(v, w)$ for $v, w \in \X(M)$ when it does not lead to confusion. Then,
    \begin{equation}
        \label{eq:curv_abuse}
        \Omega(v, w) \equiv \Omega(v^\Hor, w^\Hor) = [v^\Hor, w^\Hor] - [v, w]^\Hor.
    \end{equation}
\end{remark} 

\section{Tulczyjew triple on a principal bundle with connection}
    Hereafter, we consider a principal bundle $P$ with a principal connection $\Hor P$. In this section, we start by recalling how the presence of the connection allows to trivialise $\T P$. Then, we will use it to further trivialise iterated tangent and cotangent bundles and the canonical double vector bundle morphisms between them.

\subsection{Trivialisation of the tangent and cotangent of a principal bundle}
    The connection on the bundle provides us with a decomposition $\T P \simeq \Hor P \oplus_P \Ver P$. Using the isomorphisms \eqref{eq:iso_vertical} and \eqref{eq:iso_horizontal}, we further rewrite it to $\T P \simeq (P \times_M \T M) \oplus_P (P \times \alg)$. Omitting the redundant information, we obtain (slightly abusing the notation)
    \begin{equation}
        \label{eq:triv_TP}
        \imath_{\T P} : \T P\simeq P\times_{M} \T M\times\mathfrak{g}.
    \end{equation}

    \begin{remark}
        The formula above is explicitly given by
        \begin{equation*}
            \imath_{\T P}(V) = \left(\tau_P(V), \T\pi(V), \omega(V)\right) \in P \times_{M} \T M \times \mathfrak{g},
        \end{equation*}
        where $\tau_P : \T P \rightarrow P$, $\pi : P \rightarrow M$ and $\omega$ is the connection one-form of $\Hor P$. We also note that the inverse mapping is given by
        \begin{equation*}
            \label{eq:triv_TP_inv}
            \imath_{\T P}^{-1}(p, \V, X) = \V^H_p + \sigma(X)_p \in \T P.
        \end{equation*}
    \end{remark}

    \vspace{\spacesmall}

    Making use of \eqref{eq:triv_TP}, as well as the isomorphism $(\Ver_p P\times \Hor_p P)^{*}\simeq (\Ver_p P)^{*}\times (\Hor_p P)^{*}$ over every $p \in P$, we have
    \begin{equation*}
        \T^{*}_{p}P=(\T_pP)^{*}\simeq (\Ver_p P\times \Hor_p P)^{*}\simeq (\Ver_p P)^{*}\times (\Hor_p P)^{*}\simeq\mathfrak{g}^{*}\times (\T_{\pi(p)}M)^{*}=\mathfrak{g}^{*}\times \T^{*}_{\pi(p)}M,
    \end{equation*}
    which leads us to a trivialisation of the cotangent bundle in the form of
    \begin{equation}
        \label{eq:triv_T*P}
        \imath_{\T^* P} : \T^* P \simeq P \times_M \T^* M \times \alg^*.
    \end{equation}

    \vspace{\spacesmall}

    The pairing of $(p, \varphi, A) \in P \times_M \T^* M \times \alg^*$ and $(p, \V, X) \in P \times_M \T M \times \alg$ takes the form
    \begin{align}
        \label{eq:ev_TP}
        \langle (p, \varphi, A), (p, \V, X) \rangle = \langle \varphi, \V \rangle + \langle A, X \rangle.
    \end{align}

    \vspace{\spacesmall}

    The trivialisations of $\T P$ and $\T^*P$ give rise to further trivialisations of the iterated tangent and cotangent bundles presented in the next subsections.

\subsection{Trivialisation of \textsf{TT}\textsl{P}}
    Using \eqref{eq:triv_TP}, we get the following isomorphisms (to keep calculations clear we denoted the second $ \T $ functor in red):
    \begin{equation*}
       \textcolor{red}{\T}\T P \simeq \textcolor{red}{\T}(P \times_M \T M \times \alg) \simeq \textcolor{red}{\T}P \times_{\textcolor{red}{\T}M} \textcolor{red}{\T}\T M \times \textcolor{red}{\T}\mathfrak{g} .
    \end{equation*}
    Now, we again apply \eqref{eq:triv_TP} to $\textcolor{red}{\T}P$. We also notice that since $\mathfrak{g}$ is a vector space, we have $\textcolor{red}{\T}\mathfrak{g}\simeq\mathfrak{g}\times\textcolor{red}{\mathfrak{g}}$ and
    \begin{equation*}
        \textcolor{red}{\T}\T P\simeq P\times_{M} \T M\times\textcolor{blue}{\mathfrak{g}}\times \T\T M\times\mathfrak{g}\times\textcolor{red}{\mathfrak{g}}.
    \end{equation*}
    Finally, we observe that the second factor does not introduce any new information about $\T\T P$ as it coincides with the image of the fourth factor under the map $\T\tau_{M}$. Thus, we have
    \begin{equation}
        \label{eq:triv_TTP}
        \imath_{\T\T P} : \T\T P \simeq P\times_{M} \T\T M\times\mathfrak{g}\times\textcolor{blue}{\mathfrak{g}}\times\textcolor{red}{\mathfrak{g}}.
    \end{equation}

    \begin{remark}
        This trivialisation of the double vector bundle $\T\T P$ is a double vector bundle isomorphism. The double vector bundle structure of the codomain may be presented on the diagram
        \begin{equation*}\begin{gathered}
            \scalebox{1}{\xymatrix@C-15pt@R-5pt{
                 & P\times_{M} \T\T M\times\mathfrak{g}\times\textcolor{blue}{\mathfrak{g}}\times\textcolor{red}{\mathfrak{g}} \ar[dl]|{(\pr_1, \tau_{\T M} \circ \pr_2, \pr_3)} \ar[dr]|{(\pr_1, \T \tau_M \circ \pr_2, \pr_4)} & \\
                P \times_M \T M \times \alg \ar[dr]|{\pr_1} & P \times_M \T M \times \textcolor{red}{\alg} \ar[d]^(0.5){\pr_1} \ar@{^{(}->}[u] & P \times_M \T M \times \textcolor{blue}{\alg} \ar[dl]|{\pr_1} \\
                 & P & .
            }}
        \end{gathered}\end{equation*}
        We note that all the incoming trivialised versions of bundles will possess similar double vector structures.
    \end{remark}

\subsection{Trivialisation of the canonical flip}

As an intermediate step in the construction of the Tulczyjew triple on the discussed geometric structure, we develop a trivialisation of the automorphism $\kappa_P$ of $\T\T P$. Later on, we will use it for building up the Tulczyjew isomorphism.


Let $\tau: E\rightarrow M$ be a vector bundle and $e,f\in E$ be elements of the same fibre of $E$, i.e. $\tau(e)=\tau(f)$. The vector tangent to the curve $t\mapsto e+tf$ will be denoted by $f^\Ver_e$ and called the {\it vertical lift of $f$ to $e$}. It is indeed a vertical vector at point the $e$.

\begin{proposition}\label{prop:1}
Let $X$ and $Y$ be two vector fields on a manifold $M$. The following formula holds:
\begin{equation*}
\T X(Y(x))-\kappa_M(\T Y(X(x)))=[Y,X]^{\Ver}_{X(x)}.
\end{equation*}
\end{proposition}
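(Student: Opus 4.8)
The plan is to verify the identity in local coordinates, which is the cleanest route since all three objects — $\T X(Y(x))$, $\kappa_M(\T Y(X(x)))$, and the vertical lift $[Y,X]^\Ver_{X(x)}$ — are elements of the same fibre of $\T\T M$ over $X(x) \in \T M$, and the canonical flip $\kappa_M$ has a transparent coordinate description. First I would fix coordinates $(q^i)$ on $M$, write $X = X^i \partial_i$ and $Y = Y^i \partial_i$, and set up the induced coordinates $(q^i, \dot q^j, \delta q^k, \dot{\delta} q^l)$ on $\T\T M$, where the first tangent functor contributes $(\dot q^j)$ and the second (the outer $\T$) contributes $(\delta q^k, \dot{\delta} q^l)$. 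In these coordinates $\kappa_M$ is the swap $(q,\dot q,\delta q,\dot\delta q)\mapsto(q,\delta q,\dot q,\dot\delta q)$.

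Next I would compute each term. The point $Y(x)\in\T M$ has coordinates $(q^i, Y^i)$, and $\T X$ applied to a tangent vector $v = v^k\partial_k$ at $x$ gives, at the point $X(x)$, the coordinates $(q^i, X^i, v^k, v^k\partial_k X^l)$; taking $v = Y(x)$ yields $\T X(Y(x)) = (q^i, X^i, Y^k, Y^k\partial_k X^l)$. Symmetrically $\T Y(X(x)) = (q^i, Y^i, X^k, X^k\partial_k Y^l)$, so $\kappa_M(\T Y(X(x))) = (q^i, X^i, Y^k, X^k\partial_k Y^l)$. Subtracting, the first three slots cancel and the fourth slot becomes $Y^k\partial_k X^l - X^k\partial_k Y^l = [Y,X]^l$, the $l$-th component of the Lie bracket $[Y,X]$. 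An element of $\T\T M$ over $X(x)$ whose $(q,\dot q,\delta q)$-part agrees with that of $X(x)$ (i.e. $(q^i, X^i, 0)$) and whose $\dot\delta q$-part is a vector $Z = Z^l\partial_l$ is precisely the vertical lift $Z^\Ver_{X(x)}$, since it is tangent to the curve $t \mapsto X(x) + tZ$. Hence the difference is exactly $[Y,X]^\Ver_{X(x)}$, as claimed.

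For a coordinate-free alternative I would instead argue: $\T X(Y(x)) - \kappa_M(\T Y(X(x)))$ is manifestly vertical over $X(x)$ because both terms project under $\tau_{\T M}$ to $X(x)$ and under $\T\tau_M$ to $X(x)$ as well (using $\T\tau_M\circ\kappa_M = \tau_{\T M}$ and $\tau_{\T M}\circ\kappa_M = \T\tau_M$), so it is a vertical lift of some vector over $X(x)$; then one identifies that vector by pairing against differentials of functions, invoking the standard fact that the bracket of vector fields is recovered from $\T X, \T Y$ and $\kappa_M$ via exactly this combination. The main obstacle is purely bookkeeping: getting the roles of the two tangent functors — and therefore the direction of the flip $\kappa_M$ and the sign conventions in $[Y,X]$ versus $[X,Y]$ — consistent with the conventions fixed earlier in the paper. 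I expect no conceptual difficulty beyond that, so I would present the short coordinate computation as the proof and perhaps remark on the coordinate-free interpretation.
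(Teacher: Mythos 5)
Your proposal is correct and follows essentially the same route as the paper: the same induced coordinates on $\T\T M$, the same coordinate formulas for $\T X$, $\T Y$ and the flip $\kappa_M$, and the same identification of the difference $(q^i, X^i, 0, Y^k\partial_k X^l - X^k\partial_k Y^l)$ as a vertical lift over $X(x)$. Your sign bookkeeping is in fact the careful version: the component you obtain is $[Y,X]$, agreeing with the statement (the paper's closing line writes $[X,Y]^\Ver_{X(x)}$, which is a slip), and your coordinate-free remark is a nice but inessential addendum.
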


\begin{proof}
We will do the calculation in local, adapted coordinates. Starting from coordinates $(q^i)$ defined on some open subset $\mathcal{O}\subset M$, we construct, in the usual way, the coordinates $(q^i,\dot q^j)$ on $\tau_M^{-1}(\mathcal{O})$ and $(q^i, \dot q^j, \delta q^k, \delta \dot q^l)$ on $\tau_{\T M}^{-1}(\tau_M^{-1}(\mathcal{O}))$. If the vector fields $X$ and $Y$ are given in coordinates by functions $X^i(q)$ and $Y^i(q)$ respectively ($q$ denotes the whole set of coordinates $q^i$), then the map $\T X$ reads
$$\T X(q^i,\dot q^j)=\left(\,q^i,\, X^j(q), \,\dot q^k,\, \frac{\partial X^l}{\partial q^m}\dot q^m\,\right).$$
Similarly, $\T Y$ reads
$$\T Y(q^i,\dot q^j)=\left(\,q^i,\, Y^j(q),\, \dot q^k,\, \frac{\partial Y^l}{\partial q^m}\dot q^m\,\right).$$
Since $\kappa_M(q^i, \dot q^j, \delta q^k, \delta \dot q^l)=(q^i, \delta q^j, \dot q^k,\delta \dot q^l),$ the difference $\T X(Y)(x)-\kappa_M(\T Y(X(x)))$ reads
$$\T X(Y(x))-\kappa_M(\T Y(X(x)))=
\left(\,q^i,\, X^j(q),\, 0,\, \frac{\partial X^m}{\partial q^l}Y^l-\frac{\partial Y^m}{\partial q^l}X^l\,\right),$$
which is indeed the coordinate expression for $[X,Y]^{\Ver}_{X(x)}$.
\end{proof}

Using the above proposition, we may find the formula for $\tilde\kappa_P$, which is $\kappa_P$ composed with the trivialising $\imath_{\T\T P}$, i.e. the map
$$ \tilde\kappa_P: P\times_M\T\T M\times\mathfrak{g}\times\mathfrak{g}\times\mathfrak{g}\longrightarrow P\times_M\T\T M\times\mathfrak{g}\times\mathfrak{g}\times\mathfrak{g}.$$

First we will use Proposition \ref{prop:1} and two special vector fields on $P$. Let $\Vec{X}$ be the sum of the horizontal lift $\chi^\Hor$ of a vector field $\chi$ on $M$ with the fundamental vector field $\partial_X$ of an element $X\in\mathfrak{g}$. Such a field, in trivialisation, has the form
$$\imath_{\T P}\left(\vec{X}(p)\right)=(p, \chi(x), X) \in P \times_M \T M \times \alg,$$
where $x=\pi(p)$. The value of $\T\vec{X}$ on $(p,v,Y)$ reads in trivialisation
$$\T\vec{X}(p,v,Y)=(p,\T\chi(v), X,Y,0).$$
Therefore, we have
$$\T\vec{X}(\vec{Y}(p))=(p,\T\chi(\eta(x)),X,Y,0), \qquad
\T\vec{Y}(\vec{X}(p))=(p,\T\eta(\chi(x)),Y,X,0),$$
where $\vec{Y}=(p, \eta(q), Y)$ for a vector field $\eta$ on $M$ and $Y\in\mathfrak{g}$. The Lie bracket $[\vec{X},\vec{Y}]$ can be easily calculated. Indeed, we have
$$\vec{X}=\chi^\Hor+\partial_X, \qquad \vec{X}=\eta^\Hor+\partial_Y,$$
so also
\begin{align*}
[\vec{X},\vec{Y}]&=[\chi^\Hor+\partial_X,\eta^\Hor+\partial_Y ]=
[\chi^\Hor, \eta^\Hor]+ [\chi^\Hor,\partial_Y ]+[\partial_X,\eta^\Hor]+
[\partial_X, \partial_Y]\\
&=[\chi,\eta]^\Hor+\partial_{\Omega(\chi,\eta)}+\partial_{[X,Y]},
\end{align*}
where we used \eqref{eq:curv_abuse} and the fact that horizontal lifts are invariant vector fields, therefore $[\chi^\Hor,\partial_Y ]=0$ and $[\partial_X,\eta^\Hor]=0$.
In trivialisation, the bracket reads
$$
\imath_{\T P}\left([\vec{X},\vec{Y}](p)\right)=(\,p,\, [\chi,\eta](p),\, \Omega_p(\chi,\eta)+[X,Y]_{\mathfrak{g}}\, ) \in P \times_M \T M \times \alg.
$$
Using Proposition \ref{prop:1} for the vector fields $\vec{X}$ and $\vec{Y}$, we get in trivialisation
\begin{align*}
\tilde\kappa_P(p,\T\eta(\chi(q)), Y, X, 0)&=(\,p,\,\T\chi(\eta(q)),\, X,\, Y,\, 0\,) + (\,p,\, [\chi,\eta]^\Ver_{\chi(q)},\, X,\, 0,\,\Omega_p(\chi,\eta)+[X,Y]_{\mathfrak{g}}\,) \\
&=(\,p,\, \T\chi(\eta(q))+[\chi,\eta]^\Ver_{\chi(q)},\, X,\,Y,\, \Omega_p(\chi,\eta)+[X,Y]_{\mathfrak{g}}\,) \\
&=(\,p,\, \kappa_M(\T\eta(\chi(q)),\, X,\,Y,\, \Omega_p(\chi,\eta)+[X,Y]_{\mathfrak{g}}\,).
\end{align*}
Varying $\vec{X}$ and $\vec{Y}$, we conclude that
$$
\tilde\kappa_P(\, p,\, \mathcal{V},\, Y, X, 0)=
(\,p,\, \kappa_M(\mathcal{V}),\, X,\,Y,\, \Omega_p(\tau_{\sT M}(\mathcal{V}),\T\tau_M(\mathcal{V}))+[X,Y]_{\mathfrak{g}}\,).
$$
To determine the value of $\tilde\kappa_P$ on the most general element of the trivialised $\T\T P$, we recall that $\kappa_P$ is an automorphism of the double vector bundle $\T\T P$ over identities on both vector bundles, and it is an identity when restricted to the core. If $v$ is an element of the core of $\T\T P$ and $w$ is an element of $\sT\sT P$ over the same point in $M$, we have
\begin{equation}\label{eq:1}
\kappa_P(w+v)=\kappa_P(w)+v.
\end{equation}
In trivialisation, the core of $P\times_{M}\T\T P\times \mathfrak{g}\times\mathfrak{g}\times \mathfrak{g}$ consists of elements of the form $(p,u,0,0,Z)$, where $u$ is in the core of $\T\T M$. Using \eqref{eq:1}, we get the final formula for $\tilde\kappa_P$:
\begin{equation}
\label{eq:triv_kapp}
\tilde\kappa_P(\, p,\, \mathcal{V},\, Y, X, Z)=
(\,p,\, \kappa_M(\mathcal{V}),\, X,\,Y,\, Z+\Omega_p(\tau_{\sT M}(\mathcal{V}),\T\tau_M(\mathcal{V}))+[X,Y]_{\mathfrak{g}}\,).
\end{equation} 

\subsection{Trivialisation of \textsf{TT}*\textsl{P}}
    Trivialisation of the bundle $\T\T^* P$ is similar to that of $\T\T P$, except that we start with the trivialised version of the cotangent bundle, i.e. \eqref{eq:triv_T*P}, and then apply the tangent functor:
    \begin{equation*}
        \textcolor{black}{\T}\T^{*}P \simeq \textcolor{black}{\T}(P \times_M \T^* M \times \textcolor{red}{\mathfrak{g}^{*}}) \simeq \textcolor{black}{\T}P\times_{\textcolor{black}{\T}M}\textcolor{black}{\T}\T^{*}M\times\textcolor{black}{\T}\textcolor{red}{\mathfrak{g}^{*}}.
    \end{equation*}
    Now, as the dual of a Lie algebra is a vector space, we have $\textcolor{black}{\T}\textcolor{red}{\mathfrak{g}^{*}}\simeq \textcolor{red}{\mathfrak{g}^{*}}\times\textcolor{black}{\mathfrak{g}^{*}}$ and, by \eqref{eq:triv_TP}, we get (omitting the redundant $\T M$)
    \begin{equation*}
        \imath_{\T\T^* P}: \T\T^{*}P\simeq P\times_{M} \T\T^{*}M\times\textcolor{red}{\mathfrak{g}^{*}}\times\textcolor{blue}{\mathfrak{g}}\times\textcolor{black}{\mathfrak{g}^{*}}.
    \end{equation*}
    We coloured the Lie algebras and their duals to match the pairing discussed in the next subsection.

\subsection{Trivialisation of the pairing of \textsf{TT}*\textsl{P} and \textsf{TT}\textsl{P}}
    We now find the formula for the pairing between elements of trivialised $\T\T P$ and $\T\T^{*}P$. Let $v\in \T\T P$ and $\phi\in \T\T^{*}P$ be such that $\T\tau_{P}(v)=\T\pi_{P}(\phi)$. $v$ and $\phi$ may be written in the trivialisations as
    \begin{align*}
        \widetilde{v} &:= \imath_{\T\T P}(v)=(p,\V,X,Y,Z) \in P\times_{M} \T\T M\times\mathfrak{g}\times\textcolor{blue}{\mathfrak{g}}\times\textcolor{red}{\mathfrak{g}},\\
        \widetilde{\phi} &:= \imath_{\T\T^* P}(\phi)=(p,\varphi,A,Y,B) \in P\times_{M} \T\T^{*}M\times\textcolor{red}{\mathfrak{g}^{*}}\times\textcolor{blue}{\mathfrak{g}}\times\textcolor{black}{\mathfrak{g}^{*}}.
    \end{align*}
    Note that, since $\T\tau_{P}(v)=\T\pi_{P}(\phi)$, we have $\pr_1(\widetilde{v}) = \pr_1(\widetilde{\phi})$ and $\pr_4(\widetilde{v})=\pr_4(\widetilde{\phi})$. Moreover, $\T\tau_{M}(\V)=\T\pi_{M}(\varphi)$.

    Now, we pick a curve $\gamma_{\widetilde{v}}:\mathbb{R} \ni t \longmapsto \big(\gamma_p(t),\gamma_{\V}(t),X+tZ\big) \in \T P$ representing $\widetilde{v}$. Similarly, we pick a curve $\gamma_{\widetilde{\phi}}:\mathbb{R}\ni t \longmapsto \big(\gamma_p(t),\gamma_{\varphi}(t),A+tB\big) \in \T^{*}P$ representing $\widetilde{\phi}$ where $\gamma_{\V}(t)$ and $\gamma_{\varphi}(t)$ are curves representing $\V$ and $\varphi$ respectively, and $\gamma_p(t)$ is such a curve in $P$ that
    \begin{equation*}
        \imath_{\T P}(\der \, \gamma_p(t))=\big(p,\T\tau_{M}(\V),Y\big).
    \end{equation*}

    The evaluation of $\widetilde{v}$ on $\widetilde{\phi}$ reads
    \begin{align}
        \label{eq:triv_ev_TT*P}
        \langle\!\langle \widetilde{\phi},\widetilde{v}\rangle\!\rangle = \der \, \langle \gamma_{\widetilde{\phi}}, \gamma_{\widetilde{v}} \rangle = \der \,\big(\langle\gamma_{\varphi}(t),\gamma_{\V}(t)\rangle+\langle A+tB,X+tZ\rangle\big) \\
        \nonumber =\langle\!\langle\varphi,\V\rangle\!\rangle+\langle A,Z\rangle+\langle B,X\rangle,
    \end{align}
    where $\langle\!\langle\varphi,\V\rangle\!\rangle$ denotes the pairing between elements of the vector bundles $\T\T^{*}M$ and $\T\T M$.

\subsection{Trivialisation of \textsf{T}*\textsf{T}\textsl{P}}
    First, once again, we use \eqref{eq:triv_TP}:
    \begin{equation*}
        \T^{*}\T P\simeq \T^{*}(P\times_M \T M\times\mathfrak{g})\simeq \T^{*}(P\times_{M} \T M)\times \T^{*}\mathfrak{g}.
    \end{equation*}
    Now, we find a trivialisation of $\T^{*}(P\times_{M} \T M)$. Usually, there is no isomorphism $\T^{*}(P\times_{M} Q)\simeq \T^{*}P\times_{\T^{*}M} \T^{*}Q$, as $\T^*$ is not a covariant functor. Instead, for any fibre bundles $P$ and $Q$, we can write the isomorphism: $\T^{*}(P\times_{M} Q)\simeq (\T^{*}P\times_{M} \T^{*}Q)/(\T P\times_{\T M} \T Q)^{\circ}$. In our case, it takes the form
    \begin{align*}
        \T^*(P \times_M \T M) &\simeq (\T^* P \times_M \T^*\T M)/(\T P \times_{\T M} \T\T M)^\circ \\
        &\simeq (P \times_M \T^* M \times_M \T^*\T M \times \alg^*)/(P \times_M \times \T M \times_{\T M} \T\T M \times \alg)^\circ.
    \end{align*}
    We will find the form of this quotient in local coordinates. Let $(m^i)$ be the coordinates on some open $\mathcal{O} \subset M$, $(m^i,g^j)$ on $\pi^{-1}(\mathcal{O}) \subset P$ where $(g^j)$ are adopted to some open subset of the fibre $G$. Then $(m^i,\delta m^i)$ are the induced coordinates on $\tau_M^{-1}(M) \subset \T M$, $(m_i,p_i)$ on $\pi_M^{-1}(M) \subset \T^* M$, $(m^i,\dot{m}^i,\delta m^i,\delta\dot{m}^i)$ on $\tau_{\T M}^{-1} \circ \tau_M^{-1}(M)$ and $(m^i,\dot{m}^i,\varphi_i,\psi_i)$ on $\pi_{\T M}^{-1} \circ \tau_M^{-1}(M)$. Then, the covector
    \begin{align*}
        \tilde{\psi}=\left((m^i,g^j),(m_i,p_i), (m^i,\dot{m}^i,\varphi_i,\psi_i), A\right) \in P \times_M \T^* M \times_M \T^*\T M \times \alg^*
    \end{align*}
    belongs to the annihilator if $\langle\psi, v\rangle=0$ for every vector
    \begin{align*}
        \tilde{v}=\left((m^i,g^j), (m^i,\delta m^i), (m^i,\dot{m}^i,\delta m^i,\delta\dot{m}^i), X\right) \in P \times_M \T M \times_{\T M} \T\T M \times \alg.
    \end{align*}
    In trivialisation, the pairing is just component-wise, which in local coordinates gets us the equation
    \begin{equation*}
        \langle A,X\rangle+\delta\dot{m}^i\psi_{i}+\delta m^{i}(p_i+\varphi_i)=0,
    \end{equation*}
    which implies the conditions for $\psi$ being an element of the annihilator:
    \begin{equation*}
        A=0, \quad \psi_i=0, \quad p_i+\varphi_i=0.
    \end{equation*}
    Therefore, we see that the equivalence class of $\psi$ consists of elements having a constant sum $p_i+\varphi_i=c$. We can adopt a convenient convention that, for a given constant $c$, we will always choose the representative such that $p_i=0$ and $\varphi_i$=$c$. This allows us to write
    \begin{equation*}
        \T^{*}(P\times_{M} \T M)\simeq P\times_{M}\T^{*}\T M\times\mathfrak{g}^{*}.
    \end{equation*}
    Finally, using the fact that $\T^{*}\mathfrak{g}\simeq \mathfrak{g}\times\mathfrak{g}^{*}$, we obtain a trivialisation
    \begin{equation}
        \label{eq:triv_T*TP}
        \imath_{\T^*\T P}: \T^{*}\T P\simeq P\times_{M}\T^{*}\T M\times\mathfrak{g}\times\textcolor{blue}{\mathfrak{g}^{*}}\times\textcolor{red}{\mathfrak{g}^{*}}.
    \end{equation}

\subsection{Trivialisation of the pairing of \textsf{T}*\textsf{T}\textsl{P} and \textsf{TT}\textsl{P}}
    Any element $\psi \in \T^*\T P$ may be naturally evaluated on any $v \in \T\T P$ as long as $\pi_{\T P}(\psi) = \tau_{\T P}(v)$. For the trivialised counterparts
    \begin{align*}
        \widetilde{\psi} &:= \imath_{\T^*\T P}(\psi) = (p, \rho, X, A, B) \in P\times_{M} \T^{*}\T M\times\mathfrak{g}\times\textcolor{blue}{\mathfrak{g}^{*}}\times\textcolor{red}{\mathfrak{g}^{*}}, \\
        \widetilde{v} &:= \imath_{\T\T P}(v) = (p, \V, X, Y, Z) \in P\times_{M} \T\T M\times\mathfrak{g}\times\textcolor{blue}{\mathfrak{g}}\times\textcolor{red}{\mathfrak{g}},
    \end{align*}
    the evaluation condition takes the form of $\pr_1(\widetilde{\psi}) = \pr_1(\widetilde{v})$, $\pi_{\T M} \circ \pr_2(\widetilde{\psi}) = \tau_{\T M} \circ \pr_2(\widetilde{v})$ and $\pr_3(\widetilde{\psi}) = \pr_3(\widetilde{v})$. Using the findings of the previous subsection, the evaluation is given by
    \begin{align}
        \label{eq:triv_ev_T*TP}
        \langle \widetilde{\psi}, \widetilde{v} \rangle = \langle (p, \rho, X, A, B), (p, \V, X, Y, Z) \rangle = \langle\rho,\V\rangle+\langle A,Y\rangle+\langle B,Z\rangle.
    \end{align}

\subsection{Trivialisation of the Tulczyjew isomorphism}
    Now, we are ready to construct the trivialised Tulczyjew isomorphism $\widetilde{\alpha}_{P}$. We do it directly trivialising the definition $\langle\!\langle\phi,\kappa_P(v)\rangle\!\rangle=\langle\alpha_{P}(\phi),v\rangle$ to
    \begin{align}
        \label{eq:tulczyjew_iso_def}
        \langle\!\langle \widetilde{\phi},\widetilde{\kappa}_P(\widetilde{v})\rangle\!\rangle=\langle\widetilde{\alpha}_{P}(\widetilde{\phi}),\widetilde{v}\rangle
    \end{align} for any $v \in \T\T P$ and $\phi \in \T\T^* P$ having the trivialisations
    \begin{align*}
        \widetilde{v} &:= \imath_{\T\T P}(v)=(p,\V,Y,X,Z) \in P\times_{M} \T\T M\times\mathfrak{g}\times\textcolor{blue}{\mathfrak{g}}\times\textcolor{red}{\mathfrak{g}},\\
        \widetilde{\phi} &:= \imath_{\T\T^* P}(\psi)=(p,\varphi,A,Y,B) \in P\times_{M} \T\T^{*}M\times\textcolor{red}{\mathfrak{g}^{*}}\times\textcolor{blue}{\mathfrak{g}}\times\textcolor{black}{\mathfrak{g}^{*}}
    \end{align*}
    and satisfying the evaluation conditions $\T\pi_{M}(\varphi)=\T\tau_{M}(\kappa_{M}(\V))=\tau_{\T M}(\V)$.

    We start with the right-hand side of the defining equation \eqref{eq:tulczyjew_iso_def}. By \eqref{eq:triv_ev_T*TP}, we have
    \begin{align}
        \label{eq:tulczyjew_rhs}
        \langle\widetilde{\alpha}_{P}(\widetilde{\phi}),\widetilde{v}\rangle = \langle \pr_2(\widetilde{\alpha}_{P}(\widetilde{\phi})), \V \rangle + \langle \pr_4(\widetilde{\alpha}_{P}(\widetilde{\phi})), X \rangle + \langle \pr_5(\widetilde{\alpha}_{P}(\widetilde{\phi})), Z \rangle,
    \end{align}
    where in order for this evaluation to be well-defined $\pr_1(\widetilde{\alpha}_{P}(\widetilde{\phi})) = p$, $\pi_{\T M} \circ \pr_2(\widetilde{\alpha}_{P}(\widetilde{\phi})) = \tau_{\T M}(\V)$ and $\pr_3(\widetilde{\alpha}_{P}(\widetilde{\phi})) = Y$. To find the rest of the components, we will compare the above formula with the left-hand side of \eqref{eq:tulczyjew_iso_def}.

    Using \eqref{eq:triv_kapp}, \eqref{eq:triv_ev_TT*P} and $\T\pi_{M}(\varphi)=\tau_{\T M}(\V)$, it reads
    \begin{align*}
        \langle\!\langle \widetilde{\phi},\widetilde{\kappa}_P(\widetilde{v})\rangle\!\rangle&=\langle\!\langle(p,\varphi,A,Y,B),(p,\kappa_{M}(\V),X,Y,Z+\Omega_p(\tau_{\T M}\V,\T\tau_{M}\V)-[Y,X])\rangle\!\rangle\\
        &=\langle\alpha_{M}(\varphi),\V\rangle+
        \langle A,Z\rangle-\langle A,\ad_{Y}(X)\rangle+\langle A,\Omega_p(\T\pi_{M}(\varphi),\T\tau_M(\cdot))(\V)\rangle+\langle B,X\rangle.
    \end{align*}
    In order to further simplify the left-hand side, we use the dual of the linear mapping $\ad_Y : \alg \rightarrow \alg$, i.e. $\ad_Y^* : \alg^* \rightarrow \alg^*$ and the dual of the linear mapping $\Omega_p(\T\pi_{M}(\varphi),\T_u\tau_M(\cdot)): \T_u\T M \rightarrow \alg$, i.e. $\Omega_p(\T\pi_{M}(\varphi),\T_u\tau_M(\cdot))^*: \alg^* \rightarrow \T^*_u\T M$ where $u := \tau_{\T M}(\V) = \T\pi_{M}(\varphi)$. Because the latter is uniquely specified by just $(p, u) \in P \times_M \T M$, to ease the notation we will write $\Omega^*_{(p, \varphi)} \equiv \Omega^*_\varphi := \Omega_p(\T\pi_{M}(\varphi),\T_u\tau_M(\cdot))^*$, whenever there is only one projection onto $\T M$. Then,
    \begin{align*}
        \langle\!\langle \widetilde{\phi},\widetilde{\kappa}_P(\widetilde{v})\rangle\!\rangle&=\langle\alpha_{M}(\varphi),\V\rangle + \big\langle \Omega^*_\varphi(A),\V\big\rangle+\langle A,Z\rangle+\langle B-\ad_{Y}^{*}(A),X\rangle.
    \end{align*}
    We notice that $\Omega^*_\varphi(A)$ is an element of the core of the double vector bundle $\T^{*}\T M$ (the image of $(\T_u\tau_M)^*$ belongs to the core). Denoting addition in the core with $\dot{+}$, the left-hand side reads
    \begin{equation}
        \label{eq:tulczyjew_lhs}
        \langle\!\langle \widetilde{\phi},\widetilde{\kappa}_P(\widetilde{v})\rangle\!\rangle=\left\langle\alpha_{M}(\varphi)\dot{+}\Omega^*_\varphi(A),\V\right\rangle+ \langle B-\ad_{Y}^{*}(A),X\rangle + \langle A,Z\rangle.
    \end{equation}
    Comparing directly \eqref{eq:tulczyjew_rhs} and \eqref{eq:tulczyjew_lhs}, we obtain the rest of the components of $\widetilde{\alpha}_P$:
    \begin{align}
        \label{eq:triv_alpha}
        \widetilde{\alpha}_P &: P\times_{M} \T\T^{*}M\times\textcolor{red}{\mathfrak{g}^{*}}\times\textcolor{blue}{\mathfrak{g}}\times\textcolor{black}{\mathfrak{g}^{*}} \longrightarrow P\times_{M} \T^{*}\T M\times\mathfrak{g}\times\textcolor{blue}{\mathfrak{g}^{*}}\times\textcolor{red}{\mathfrak{g}^{*}} \\
        \nonumber &: (p,\varphi,A,Y,B) \longmapsto (p,\alpha_{M}(\varphi)\dot{+}\Omega^*_\varphi(A),Y,B-\ad_{Y}^{*}(A),A).
    \end{align} 

\subsection{Trivialisation of \textsf{T}*\textsf{T}*\textsl{P} and the pairing with \textsf{TT}*\textsl{P}}
    Trivialisation of $\T^*\T^* P$ is analogous to that of $\T^*\T P$. The only difference is that we start with the trivialisation of $\T^{*}P$ \eqref{eq:triv_T*P}, and then we repeat the reasoning which led to \eqref{eq:triv_T*TP}. We arrive at
    \begin{equation}
        \label{eq:triv_T*T*P}
        \imath_{\T^*\T^* P} : \T^{*}\T^{*}P\simeq P\times_{M}\T^{*}\T^{*}M\times\textcolor{red}{\mathfrak{g}^{*}}\times\textcolor{blue}{\mathfrak{g}^{*}}\times\textcolor{black}{\mathfrak{g}}.
    \end{equation}

    Similarly as for the bundles $\T^*\T P$ and $\T\T P$, there is a natural pairing of elements $\theta \in \T^*\T^* P$ and $\psi \in \T\T^* P$ as long as $\pi_{\T^* P}(\theta) = \tau_{\T^* P}(\psi)$. The trivialised counterparts read
    \begin{align*}
        \widetilde{\theta} &:= \imath_{\T^*\T^* P}(\phi) = (p, f, A, B, X) \in P\times_{M} \T^{*}\T M\times\textcolor{red}{\mathfrak{g}^*}\times\textcolor{blue}{\mathfrak{g}^{*}}\times\textcolor{black}{\mathfrak{g}}, \\
        \widetilde{\psi} &:= \imath_{\T\T^* P}(\psi) = (p, \varphi, A, Y, C) \in P\times_{M} \T\T M\times\textcolor{red}{\mathfrak{g}^*}\times\textcolor{blue}{\mathfrak{g}}\times\textcolor{black}{\mathfrak{g}^*},
    \end{align*}
    and the evaluation condition takes the form $\pr_1(\widetilde{\theta}) = \pr_1(\widetilde{\psi})$, $\pi_{\T^* M} \circ \pr_2(\widetilde{\theta}) = \tau_{\T^* M} \circ \pr_2(\widetilde{\psi})$ and $\pr_3(\widetilde{\theta}) = \pr_3(\widetilde{\psi})$. The evaluation is given by
    \begin{align}
        \label{eq:triv_pairing_T*T*P}
        \langle \widetilde{\theta}, \widetilde{\psi} \rangle = \langle (p, f, A, B, X), (p, \varphi, A, Y, C) \rangle = \langle f,\varphi \rangle + \langle B,Y\rangle + \langle C,X\rangle.
    \end{align}

\subsection{Trivialisation of the canonical isomorphism of \textsf{T}*\textsf{T}*\textsl{P} and \textsf{T}*\textsf{T}\textsl{P}}
    In \cite{canonical_gamma}, Dufour introduced the canonical isomorphism $\gamma_{E}:\T^* E^*\rightarrow \T^* E$ for a vector bundle $E$. The graph of this morphism is precisely the Lagrangian submanifold $\mathcal{L}_{\epsilon}$ generated by the pairing function $\epsilon$ between $E^*$ and $E$, defined on the submanifold $E^\ast\times_M E\subset E^\ast\times E$.

    We will now repeat this construction but taking the trivialised version of $\T P$ as the vector bundle and \eqref{eq:ev_TP} as the generating function ${\widetilde{\epsilon}}$. $\mathcal{L}_{\widetilde{\epsilon}}$ is now a submanifold of a trivialisation of $\T^{*}(\T^* P\times \T P)$, i.e. $(P\times_{M}\T^{*}\T^* M\times\alg^{*}\times\textcolor{blue}{\alg^{*}}\times\textcolor{red}{\alg})\times (P\times_{M}\T^{*}\T M\times\alg\times\textcolor{blue}{\alg^{*}}\times\textcolor{red}{\alg^{*}})$.
    We want to find the conditions that a general covector $\Theta = (p,\varphi,B,A,X; p,\psi,Y,C,D)$, projecting onto the trivialised $\T^*P \times_P \T P$, needs to meet in order to be in $\mathcal{L}_{\widetilde{\epsilon}}$. For this, we use the equation induced by the generating function
    \begin{align}
        \label{eq:generating_of_submanifold}
        \langle \Theta , \mathcal{U} \rangle = \dif \widetilde{\epsilon} (\mathcal{U}),
    \end{align}
    where $\mathcal{U}$ is any element of a trivialisation of $\T(\T^*P \times_P \T P) \simeq \T\T^* P \times_{\T P} \T\T P$ that may be evaluated on $\Theta$:
    \begin{equation*}
        \mathcal{U} = (p,f,B,Z,\delta B; p,u,Y,Z,\delta Y) \in (P\times_{M} \T\T^* M\times\alg^{*}\times\textcolor{blue}{\alg}\times\textcolor{red}{\alg^{*}}) \times_{\T P} (P\times_{M} \T\T M\times\alg\times\textcolor{blue}{\alg}\times\textcolor{red}{\alg}).
    \end{equation*}
    By \eqref{eq:triv_ev_T*TP} and \eqref{eq:triv_pairing_T*T*P}, the evaluation on the left-hand side of \eqref{eq:generating_of_submanifold} reads
    \begin{align*}
        \langle \Theta , \mathcal{U} \rangle = \langle A,Z\rangle+\langle\delta B,X\rangle+\langle C,Z\rangle+\langle D, \delta Y\rangle+\langle\varphi, f\rangle+\langle\psi,u\rangle.
    \end{align*}
    We identify the right-hand side as the evaluation \eqref{eq:triv_ev_TT*P} understood as a tangent mapping:
    \begin{equation*}
        \dif \widetilde{\epsilon} (\mathcal{U}) = \llangle f, u \rrangle + \langle \delta B, Y \rangle + \langle B, \delta Y \rangle.
    \end{equation*}
    Comparing both sides, we get
    \begin{align*}
        D = B, \quad Y = X, \quad A + C = 0 \quad \textrm{and} \quad \llangle f, u \rrangle = \langle\varphi, f\rangle+\langle\psi,u\rangle.
    \end{align*}
    The last condition encodes the canonical isomorphism $\gamma_{\T M} : \T^*\T^* M \rightarrow \T^*\T M$. Therefore, elements of the submanifold $\mathcal{L}_{\widetilde{\epsilon}}$ take the form
    \begin{equation*}
        \left ( p, \varphi, B, A, X; p, \gamma_{\T M}(\varphi), X, -A, B \right ).
    \end{equation*}
    As this submanifold is the graph of $\widetilde{\gamma}_{\T P}$, we get
    \begin{align}
        \label{eq:triv_canonical_iso}
        \widetilde{\gamma}_{\T P} &: P\times_{M}\T^{*}\T^* M\times\alg^{*}\times\textcolor{blue}{\alg^{*}}\times\textcolor{red}{\alg} \longrightarrow P\times_{M}\T^{*}\T M\times\alg\times\textcolor{blue}{\alg^{*}}\times\textcolor{red}{\alg^{*}} \\
        \nonumber &:( p, \varphi, B, A, X) \longmapsto \left( p, \gamma_{\T M}(\varphi), X, -A, B \right ).
    \end{align}

\subsection{Trivialisation of the symplectic-based isomorphism}
    As mentioned earlier, the symplectic-based isomorphism may be written as $\beta_M=\gamma_{\T M}^{-1}\circ\alpha_{M} : \T\T^* M \rightarrow \T^*\T^* M$. Therefore, in the case of a trivialised principal bundle with connection, the formula reads
    \begin{align*}
        \widetilde{\beta}_P(p,\varphi,A,Y,B) &= \widetilde{\gamma}_{\T P}^{-1}\circ \widetilde{\alpha}_{P}(p,\varphi,A,Y,B)\\
        &=\widetilde{\gamma}_{\T P}^{-1}(p,\alpha_{M}(\varphi)\dot{+}\Omega^*_\varphi(A),Y,B-\ad_{Y}^{*}(A),A)\\
        &=\left(p,\gamma_{\T M}^{-1}\big(\alpha_{M}(\phi)\dot{+}\Omega^*_\varphi(A)\big),A,\ad_{Y}^{*}(A)-B,Y\right).
    \end{align*}
    We may, once again, rewrite it using the fact that both double vector bundles $\T^*\T^* M$ and $\T^* \T M$ have the same core, and $\gamma_{\T M}$ inverts the sign of vectors in the core. Thus, denoting by $\dot{-}$ subtraction in the core of $\T^{*}\T^{*}M$, we arrive at the final form of the trivialised symplectic-based isomorphism:
    \begin{align}
        \label{eq:triv_beta}
        \widetilde{\beta}_P &: P\times_{M} \T\T M\times\textcolor{red}{\mathfrak{g}^*}\times\textcolor{blue}{\mathfrak{g}}\times\textcolor{black}{\mathfrak{g}^*} \longrightarrow P\times_{M}\T^{*}\T^* M\times\alg^{*}\times\textcolor{blue}{\alg^{*}}\times\textcolor{red}{\alg} \\
        \nonumber &: (p,\varphi,A,Y,B) \longmapsto \left(p,\beta_{M}(\varphi)\dot{-}\Omega^*_\varphi(A),A,\ad_{Y}^{*}(A)-B,Y\right).
    \end{align} 

\subsection{Dynamics on a principal bundle with connection}
    The trivialised Tulczyjew triple on a principal bundle with connection is the following diagram summarising the above results:
    \begin{equation}\begin{gathered}
        \label{xy:triple_on_P}
        \scalebox{1}{\xymatrix@C-15pt@R-10pt{
            P \times_M \T^*\T^* M \times \alg^* \times \alg^* \times \alg \ar[dr]|{(\pr_1, \pi_{\T^* M}\circ \pr_2, \pr_3)} & & \\
             & P \times_M \T^* M \times \alg^* \ar[dr]|{\pr_1} & \\
            P \times_M \T\T^* M \times \alg^* \times \alg \times \alg^* \ar[uu]_{\tilde{\beta}_P} \ar[dd]^{\tilde{\alpha}_P} \ar[ur]|{(\pr_1, \tau_{\T^* M} \circ \pr_2, \pr_3)} \ar[dr]|{(\pr_1, \T\pi_M \circ \pr_2, \pr_4)} & & P \\
             & P \times_M \T M \times \alg \ar[ur]|{\pr_1} & \\
            P \times_M \T^*\T M \times \alg \times \alg^* \times \alg^* \ar[ur]|{(\pr_1, \pi_{\T M} \circ \pr_2, \pr_3)} & & .
         }}
    \end{gathered}\end{equation}

    We may consider a mechanical system with a principal bundle serving as the configuration manifold and a Lagrangian $L : \T P \rightarrow \R$. We may introduce the trivialised Lagrangian $\widetilde{L} := L \circ \imath_{\T P}^{-1} : P \times_M \T M \times \alg \rightarrow \R$. The trivialised dynamics of the system $\widetilde{\mathcal{D}} = \widetilde{\alpha}_P^{-1} \circ \dif \widetilde{L} (P \times_M \T M \times \alg) \in P \times_M \T\T^* M \times \alg^* \times \alg \times \alg^*$ is given by
    \begin{gather}
        \label{eq:dynamics}
        \dif \widetilde{L} (V) \equiv \dif \widetilde{L} (p, v, X) = (p, \partial_{\T M} \widetilde{L}_V, X, \partial_G \widetilde{L}_V, \partial_\alg \widetilde{L}_V) \in P \times_M \T^*\T M \times \alg \times \alg^* \times \alg^*, \\
        \nonumber \widetilde{\mathcal{D}} = \left \{ \left. \begin{gathered} (p, \varphi, A, X, B) \in \\ P \times_M \T\T^* M \times \alg^* \times \alg \times \alg^*\end{gathered} \right| \begin{aligned}\varphi &= \alpha_M^{-1}(\partial_{\T M} \widetilde{L}_V) \dot{-} \Omega^*_{(p, \partial_{\T M}\widetilde{L}_V)}(\partial_\alg \widetilde{L}_V), \\
        A &= \partial_\alg \widetilde{L}_V, \quad B = \partial_G \widetilde{L}_V + \ad_X^*(\partial_\alg \widetilde{L}_V) \\ &\text{for some } V \in P\times_M \T M \times \alg \end{aligned} \right \},
    \end{gather}
    where $\partial_{\T M} \widetilde{L}_V \in \T^*\T M$ we identify with the restriction of $\dif \widetilde{L}(V)$ to the subspace $\T_v\T M$, i.e. $\partial_{\T M} \widetilde{L}_V := \dif \widetilde{L}(V) \circ \pr_{\T_v\T M}$ ($\pr_{\T_v\T M}$ is the projection onto the first component of the direct sum (the fibre) $\T_v\T M \oplus \alg \oplus \alg$). Similarly, $\partial_G \widetilde{L}_V := \dif \widetilde{L}(V) \circ \pr_{\alg_1}$, which may be written as $\partial_G \widetilde{L}_V = \der\, \widetilde{L}(p\exp(tX_a), v, X) A^a$, where $\{X_a\}$ is a basis of $\alg$ and $\{A^a\}$ is its dual. Lastly, $\partial_\alg \widetilde{L}_V := \dif \widetilde{L}(V) \circ \pr_{\alg_2}$, which may be written as $\partial_G \widetilde{L}_V = \der\, \widetilde{L}(p, v, X + tX_a) A^a$.

    Now, let $H : \T^* P \rightarrow \R$ be the Hamiltonian of a system with the trivialisation $\widetilde{H} := H \circ \imath_{\T^* P}^{-1} : P \times_M \T^* M \times \alg^* \rightarrow \R$. The dynamics of the system, i.e. the image of the Hamiltonian vector field $\widetilde{X}_H := \widetilde{\beta}_P^{-1} \circ \dif \widetilde{H}$, takes the form
    \begin{gather}
        \label{eq:dynamics_ham}
        \dif \widetilde{H}(\mathfrak{A}) \equiv \dif \widetilde{H} (p, \alpha, A) = (p, \partial_{\T^* M} \widetilde{H}_\mathfrak{A}, A, \partial_G \widetilde{H}_\mathfrak{A}, \partial_{\alg^*} \widetilde{H}_\mathfrak{A}) \in P \times_M \T^*\T^* M \times \alg^* \times \alg^* \times \alg, \\
        \nonumber \widetilde{\mathcal{D}} = \left \{ \left. \begin{gathered} (p, \varphi, A, X, B) \in \\ P \times_M \T\T^* M \times \alg^* \times \alg \times \alg^*\end{gathered} \right| \begin{aligned}\varphi &= \beta_M^{-1}(\partial_{\T^* M} \widetilde{H}_\mathfrak{A}) \dot{-} \Omega^*_{(p, \partial_{\T^* M} \widetilde{H}_\mathfrak{A})}(A), \\
        X &= \partial_{\alg^*} \widetilde{H}_\mathfrak{A}, \quad B = \partial_G \widetilde{H}_\mathfrak{A} + \ad_{\partial_{\alg^*} \widetilde{H}_\mathfrak{A}}^*(A) \\ &\text{for some } \mathfrak{A} \in P\times_M \T^* M \times \alg^* \end{aligned} \right \},
    \end{gather}
    where $\partial_{\T^* M} \widetilde{H}_\mathfrak{A}$, $\partial_G \widetilde{H}_\mathfrak{A}$ and $\partial_{\alg^*} \widetilde{H}_\mathfrak{A}$ are defined in a similar manner to the Lagrangian case.

\subsection{Example: Tulczyjew triple on the frame bundle of a sphere}
    \label{sec:example1}
    In this chapter, we present the machinery derived in the previous subsections in action. We start by considering a $2$-dimensional sphere. Let $\fr$ denote the principal bundle of oriented orthogonal frames. We will now describe a more convenient form of presenting $\fr$.

    First, we embed $S^2$ into $\R^{3}$ as a unit sphere centred at the origin. $\R^{3}$ is an affine space, hence every tangent vector $v \in \T S^2$ may be treated as a vector in $\R^{3}$ (which we denote with $\vec{v}$).

    Furthermore, every point of $S^2$ may be represented by a unit vector $\vec{n}$ from the origin to the said point.  Therefore, every point of $\fr$ may be represented by three orthonormal vectors in $\R^{3}$ (we add $\vec{n}$ as the $0$th vector of the frame).

    In this presentation, some other constructions also take convenient forms.
    \begin{itemize}
        \item There is a (left) $SO(3)$-action on $\fr$ of the form
            \begin{align*}
                SO(3) \times \fr \ni \big(\chi, p \big) \longmapsto \chi \cdot p = \left(\chi\vec{n}, \big(\chi\vec{f}_1, \chi\vec{f}_2\big)\right) \in \fr,
            \end{align*}
            where $f := (\vec{f}_1, \vec{f}_2)$ denotes an element of the space $\T_{\vec{n}}S^2$. As the action is transitive and free, $\fr$ (identified with $3$ vectors in $\R^{3}$) is a principal homogeneous $SO(3)$-space. What follows is that the (left) fundamental vector fields corresponding to this action (in the antihomomorphism convention)
            \begin{align*}
                \fr \times so(3) \ni (p, X) \longmapsto \der \, \exp(tX) \cdot p = Xp
            \end{align*}
            span the tangent space at every point, and $\T \fr \simeq \fr \times so(3)$.
        \item The algebra $so(3)$ is isomorphic to a Lie algebra built out of the vector space $\R^3$ with a bracket given by the standard cross product. The isomorphism reads
            \begin{align*}
                so(3) \; : \;
                X := \begin{bmatrix}
                    0 & -z & y \\
                    z & 0 & -x \\
                    -y & x & 0
                \end{bmatrix} \longmapsto
                \begin{bmatrix}
                    x \\
                    y \\
                    z
                \end{bmatrix}
                =: \vec{x} \in \R^3.
            \end{align*}
            What is more, the isomorphism turns the action of the standard representation of $so(3)$ on $\R^3$ into the cross product: $X\vec{z} = \vec{x} \times \vec{z}$. Additionally, $\vec{x}$ is the axis of the rotation generated by $X \in so(3)$, where the angle of the rotation is given by $t\|\vec{x}\|$ (with the standard Euclidean scalar product). A tedious calculation shows also that, for $\chi \in SO(3)$, the matrix representing the vector $\chi \vec{n}$ is precisely $\AdG_\chi (N)$, where $\vec{n} \simeq N$.
        \item The structure (right) $SO(2)$-action takes the form
            \begin{align*}
                \fr \times SO(2) \ni (p, \lambda) \longrightarrow p \triangleleft \lambda := \lambda^{-1}_{\vec{n}} \cdot p = (\vec{n}, \lambda^{-1}_{\vec{n}}(f)) \in \fr,
            \end{align*}
            where $\lambda_{\vec{n}}$ denotes the embedding of $SO(2)$ into $SO(3)$ as a subgroup of rotations around the axis $\vec{n}$. We obviously identify the trivial algebra $so(2)$ with $\R$, and the fundamental vector fields of this action may be written as
            \begin{equation*}
                \fr \times \R \ni (p, x) \longmapsto \der \, p \triangleleft \exp(txN) = -xNp.
            \end{equation*}
        \item We now also have a convenient way of thinking about curves representing a vector $(\vec{n}, \vec{v}) \in \T S^2$ (where $\vec{n}$ and $\vec{v}$ are orthogonal) as $\exp(tX)\vec{n}$ such that $X\vec{n} = \vec{v}$. Furthermore, any element of $(\vec{n}, \vec{v}, \vec{w}, \vec{u}) \in \T\T S^2$ may be represented by a curve of the form $(\exp(tY)\vec{n}, (1 + yt)\exp(tY)\exp(txN)\vec{v}) \in \T S^2$ or, equivalently, a homotopy $h(s, t) = \exp(sY)\exp(sxN)\exp(t(1+ys)X)\vec{n}$ where $X\vec{n} = \vec{v}$, $Yn = \vec{w}$ and $(Y+xN+y)\vec{v} = \vec{u}$. This leads to the canonical flip on the sphere in the form of
            \begin{align*}
                \kappa_{S^2}(\vec{n}, \vec{v}, \vec{w}, \vec{u}) = (\vec{n}, \vec{w}, \vec{v}, \vec{u}).
            \end{align*}
        \item It is worth noting that the actions of $SO(2)$ and $SO(3)$ commute, in the sense that
            \begin{align}
                \label{eq:SO2_SO3_comm}
                \chi \circ \lambda_{\vec{z}} = \lambda_{\chi(\vec{z})} \circ \chi.
            \end{align}
        \item Using \eqref{eq:SO2_SO3_comm}, we complete the picture with the formula for the tangent action of $\lambda \in SO(2)$ on $(p, X) \in \fr \times so(3)$:
            \begin{align}
                \label{eq:tangent_action_SO2}
                (p, X) \triangleleft \lambda &= \der \, \lambda_{\exp(tX)p}^{-1} \exp(tX) p = \der \, \exp(tX) \lambda_p^{-1} p \equiv (\lambda_p^{-1} p, X).
            \end{align}
    \end{itemize}

    The above constructions provide a way to transport a frame between fibres with the help of an element of $SO(3)$, but the transport is not unique. Now, we will specialise it to a proper connection.

    \vspace{5mm}
    We recall that the vertical subbundle is defined as the kernel of $\T\pi_\fr$. Hence, we first calculate the projection onto $\T S^2$ of the trivialised tangent vector $((\vec{n}, f), X) \in \fr \times so(3) \simeq \T \fr$:
    \begin{align*}
        \T \pi_\fr((\vec{n}, f), X) &= \der \, \pi_\fr \big(\exp(tX)\vec{n}, \exp(tX)f\big) = X\vec{n} \equiv \vec{x} \times \vec{n}.
    \end{align*}
    Therefore, the vertical bundle over every $\vec{n} \in S^2$ consists of vectors satisfying $\vec{n} \times \vec{x} = 0$. Using the Euclidean metric on $\R^3$ we may express this condition in an alternative form $\langle \vec{n} \mid \vec{x} \rangle = \| \vec{x} \|$, which yields
    \begin{align*}
        \Ver \fr = \left\{ \big( (\vec{n}, f), \vec{x} \big) \in \fr \times so(3) \; \mid \; \langle \vec{n} \mid \vec{x} \rangle = \| \vec{x} \| \right\}.
    \end{align*}
    This inspires a connection on $\fr$ in the form of the orthogonal complement $\Hor_p\fr := (\Ver_p\fr)^\bot$. The projections onto the vertical and horizontal subbundles read
    \begin{align*}
        \pr_\Ver &: \T \fr \longrightarrow \Ver \fr : ((\vec{n}, f), \vec{x}) \longmapsto ((\vec{n}, f), \langle \vec{n} \mid \vec{x} \rangle \vec{n}), \\
        \pr_\Hor &: \T \fr \longrightarrow \Hor \fr : ((\vec{n}, f), \vec{x}) \longmapsto ((\vec{n}, f), \vec{x} - \langle \vec{n} \mid \vec{x} \rangle \vec{n}).
    \end{align*}
    The induced connection one-form takes the form
    \begin{align*}
        \omega_\fr &: \T\fr \longrightarrow so(2) \simeq \R : (p, X) \equiv ((\vec{n}, f), \vec{x}) \longmapsto \langle \vec{n} \mid \vec{x} \rangle,
    \end{align*}
    and the principality of the connection is ensured by the calculation
    \begin{align*}
            (\wp_\lambda^* \omega_\fr)((\vec{n}, f), \vec{x}) &= \omega_\fr\big(\T \wp_\lambda((\vec{n}, f), \vec{x})\big) \equiv \omega_\fr \big( \lambda^{-1}_{\vec{n}}(\vec{n}, f), \vec{x} \big) \\
            &= \omega_\fr \big( (\vec{n}, \lambda^{-1}_{\vec{n}}f), \vec{x} \big) = \langle \vec{n} \mid \vec{x} \rangle = \AdG_{\lambda^{-1}}(\langle \vec{n} \mid \vec{x} \rangle).
    \end{align*}
    We consider $\vec{v}\in \T_qS^2$. The horizontal lift takes the form $\vec{v}_p^\Hor=(p,\vec{n}\times\vec{v})$. Indeed, it is horizontal as $\omega(p,\vec{n}\times\vec{v}) = \bra{\vec{n}}\ket{\vec{n}\times\vec{v}}=0$ and $\T\pi_\fr(p,\vec{n}\times\vec{v}) = (\vec{n}\times\vec{v}) \times \vec{n} =\vec{v}$.

    \vspace{\spacemid}

    In order to formulate the trivialised Tulczyjew triple on the frame bundle, we have to find the curvature two-form.

    We will use the identity \eqref{alternative_omega}. We consider two fundamental vector fields -- generated by $X$ and $Y$. First, we want to calculate the Lie bracket of their horizontal parts (we treat $\vec{n}$ as the tautological vertical vector field on $\fr$, which assigns to a point on the sphere its unit vector: $\fr \ni p \longmapsto \pi_\fr(p) \simeq \vec{n} \simeq N \in so(3)$):
    \begin{equation*}
        [X-\bra{\vec{n}}\ket{\vec{x}}N,Y-\bra{\vec{n}}\ket{\vec{y}}N]=[X,Y]-[\bra{\vec{n}}\ket{\vec{x}}N,Y]-[X,\bra{\vec{n}}\ket{\vec{y}}N]+[\bra{\vec{n}}\ket{\vec{x}}N,\bra{\vec{n}}\ket{\vec{y}}N].
    \end{equation*}
    The components above become:
    \begin{itemize}
        \item $[X, Y] = [Y, X]_{so(3)} = \vec{y} \times \vec{x}$,
        \item $\vec{n}$ is a vector field conserved by any rotation of $S^2$. Therefore, $\vec{n}$ commutes with the generators of the rotations, i.e. $[N,X]=[N,Y]=0$. This enables the calculation
            \begin{align*}
                [\bra{\vec{n}}\ket{\vec{x}}N,Y]&=\bra{\vec{n}}\ket{\vec{x}}[N,Y]-Y(\bra{\vec{n}}\ket{\vec{x}})N=-Y(\bra{\vec{n}}\ket{\vec{x}})N=-\der \,\bra{\exp(tY)\vec{n}}\ket{\vec{x}}N \\
                &=-\der \,\bra{\vec{n}}\ket{\exp(-tY)\vec{x}}N=\bra{\vec{n}}\ket{\vec{y}\times\vec{x}}N,
            \end{align*}
        \item we get the same result for
            \begin{equation*}
                [X,\bra{\vec{n}}\ket{\vec{y}}N]=\bra{\vec{n}}\ket{\vec{y}\times\vec{x}}N,
            \end{equation*}
        \item for the last component, we use the product rule twice:
            \begin{equation*}
                [\bra{\vec{n}}\ket{\vec{x}}N,\bra{\vec{n}}\ket{\vec{y}}N]=\bra{\vec{n}}\ket{\vec{x}}\bra{\vec{n}}\ket{\vec{y}}[N,N]+\big(\bra{\vec{n}}\ket{\vec{x}}N\bra{\vec{n}}\ket{\vec{y}}-\bra{\vec{n}}\ket{\vec{y}}N\bra{\vec{n}}\ket{\vec{x}}\big)N=0.
            \end{equation*}
    \end{itemize}
    Finally, the curvature two-form takes the form
    \begin{equation*}
        \Omega(X, Y) = -\omega([X-\bra{\vec{n}}\ket{\vec{x}}N,Y-\bra{\vec{n}}\ket{\vec{y}}N]) = -\bra{\vec{n}}\ket{\vec{y}\times\vec{x}}+\bra{\vec{n}}\ket{\vec{y}\times\vec{x}}+\bra{\vec{n}}\ket{\vec{y}\times\vec{x}} = \bra{\vec{n}}\ket{\vec{y}\times\vec{x}}.
    \end{equation*}

    Using the calculated horizontal lift, we can express the curvature two-form in the way present in the Tulczyjew triple. We take $v,w\in T_{\vec{n}}S^2$. Then,
    \begin{equation*}
        \Omega_p(\vec{v},\vec{w}) \equiv \Omega(\vec{v}_p^\Hor,\vec{w}_p^\Hor)=\bra{\vec{n}}\ket{(\vec{n}\times\vec{w})\times(\vec{n}\times\vec{v})}=\bra{\vec{n}}\ket{\bra{\vec{n}}\ket{\vec{w}\times\vec{v}}\vec{n}}=\bra{\vec{n}}\ket{\vec{w}\times\vec{v}}.
    \end{equation*}

    \vspace{5mm}

    Having prepared all the necessary components, we can trivialise the iterated tangent bundle and, in consequence, the canonical flip. The choice of the connection provides the explicit form of the trivialisation map $\imath_{\fr}$ in the case of the frame bundle identified with three vectors in $\R^3$:
    \begin{align*}
        \imath_{\T\fr} &: \T \fr \simeq \fr \times_{S^2} \T S^2 \times so(2) \\
        &: \big((\vec{n}, f), \vec{x} \big) \longmapsto ((\vec{n}, f), \vec{x} \times \vec{n}, \langle \vec{n} \mid \vec{x} \rangle ).
    \end{align*}
    By \eqref{eq:triv_TTP} and an identification $so(2) \simeq \R$, trivialised $\T\T \fr$ becomes
    \begin{align} \label{eq:ex_TTP}
        \imath_{\T\T\fr} : \T \T \fr \simeq \fr \times_{S^2} \T\T S^2 \times \R \times \R \times \R.
    \end{align}

    To formulate the first canonical isomorphism, we consider
    \begin{gather*}
        p = (\vec{n}, f) \in \fr, \quad \V = (\vec{n}, \vec{v}, \vec{w}, \vec{u}) \in \T\T S^2, \quad x, y, z \in \R \simeq so(2).
    \end{gather*}
    Using Formula \eqref{eq:triv_kapp}, we get the following form of the canonical flip on the trivialised frame bundle:
    \begin{gather*}
        \tilde{\kappa}_\fr\big(p, (\vec{n}, \vec{v}, \vec{w}, \vec{u}), x, y, z\big) = \big(p, (\vec{n}, \vec{w}, \vec{v}, \vec{u}), y, x, z + \bra{\vec{n}}\ket{\vec{w}\times\vec{v}}\big).
    \end{gather*}
    This formula is a little simpler than the general one because $so(2)$ is a commutative algebra.

    Due to the form of $so(2)$, we also have $so^*(2) \simeq \R^* \simeq \R$, and trivialisations of the rest of iterated tangent and cotangent bundles of the principal bundle are very similar to \eqref{eq:ex_TTP}.

    We find the form of $\Omega(\vec{v}_p^\Hor,(\pr_3(\cdot))^\Hor_p)^{*}(a)$ for $a \in so(2)^{*}\cong\mathbb{R}$ using duality:
    \begin{equation*}
        \big\langle\Omega(\vec{v}_p^\Hor,\pr_3(\cdot)^\Hor_p)^{*}(a),\V\big\rangle=\langle a,\Omega(\vec{v}_p^\Hor,\pr_3(\V)^\Hor_p)\rangle = a\bra{\vec{v}\times\vec{n}}\ket{\pr_3(\V)}.
    \end{equation*}
    Therefore,
    \begin{equation*}
        \Omega_p(\vec{v},\pr_3(\cdot))^{*}(a)=a\langle\vec{v}\times\vec{n}\mid\pr_3(\cdot)\rangle \equiv (\vec{n}, 0, a\langle\vec{v}\times\vec{n}\mid\cdot \rangle, 0) \simeq (\vec{n}, 0, a\vec{v}\times\vec{n}, 0) \in \T^*\T S^2,
    \end{equation*}
    where we identified the spaces  $(\R^3)^* \simeq \R^3$ via the dot product. To calculate the trivialised Tulczyjew triple on the frame bundle of $S^2$, we consider
    \begin{gather*}
        \Phi \in \T\T^* S^2, \quad \T\pi_{S^2}(\Phi) = (\vec{n}, \vec{v}), \quad a, b \in \R \simeq so(2)^*.
    \end{gather*}
    By \eqref{eq:triv_alpha}, the Tulczyjew isomorphism reads
    \begin{gather*}
        \tilde{\alpha}_\fr(p, \Phi, a, y, b) = (p, \alpha_{S^2}(\Phi) \dot{+} a \langle \vec{v} \times \vec{n} \mid \pr_3(\cdot) \, \rangle, y, b, a).
    \end{gather*}
    By \eqref{eq:triv_beta}, the symplectic-based isomorphism reads
    \begin{gather*}
        \tilde{\beta}_\fr(p, \Phi, a, y, b) = (p, \beta_{S^2}(\Phi) \dot{-} a \langle \vec{v} \times \vec{n} \mid \pr_3(\cdot) \, \rangle, -b, a, y).
    \end{gather*}
    Once again, we note that $\ad^*_x = 0$ simplifies the formulae.

\subsection{Example: Dynamics of a system modelled by the frame bundle on the sphere}
    A natural physical example of a system with configuration space modelled by the frame bundle is an axially symmetric rigid body confined to move on a sphere in a way that the symmetry axis is always normal to the sphere. The Lagrangian of such a system, explicitly exercising the connection on the frame bundle, is
    \begin{align*}
        \imath_{\T\fr}(\fr) \ni (p, (\vec{n}, \vec{v}), r) \longmapsto L(p, (\vec{n}, \vec{v}), r) := \frac12 I_\perp \|\vec{n} \times \vec{v}\|^2 + \frac12 I_{ax}r^2 \in \R,
    \end{align*}
    where $I_\perp$ is the moment of inertia for rotations of the body around the centre of the sphere and $I_{ax}$ is the moment of inertia for rotations around the symmetry axis of the body.

    We find the explicit form of $\dif L(p, (\vec{n}, \vec{v}), r) = (p, (\vec{n}, \vec{v}, \vec{\alpha}, \vec{\beta}), r, a, b) \in \imath_{\T^*\T\fr}(\T^*\T\fr)$ evaluating it on a vector $(p, (\vec{n}, \vec{v}, \vec{u}, \vec{w}), r, k, s) \in \imath_{\T\T\fr}(\T\T\fr)$ represented by the curve $(\exp(tY)p, (\exp(tY)\vec{n}, (1 + yt)\exp(tY)\exp(txN)\vec{v}), r + ts) \in \imath_{\T\fr}(\T \fr)$, where $Yp \simeq (p, Y\vec{n}, k)$, $\vec{u} = Y\vec{n}$, $\vec{w} = (Y + xN + y)\vec{v}$:
    \begin{align*}
        \dif L (p, (\vec{n}, \vec{v}, \vec{u}, \vec{w}), r, k, s) &\equiv \der\, L(\exp(tY)p, (\exp(tY)\vec{n}, (1 + yt)\exp(tY)\exp(txN)\vec{v}), r + ts) \\
        &= I_\perp \bra{\vec{v}}\ket{y\vec{v}} + I_{ax}rs .
    \end{align*}
    This leads us to $\dif L(p, (\vec{n}, \vec{v}), r) = (p, (\vec{n}, \vec{v}, c\vec{n}, I_\perp  \vec{v}), r, 0, I_{ax}r)$, where $c \in \R$ and $c\vec{n}$ is a representative of the annihilator of $\{Y\vec{n} \}_{Y \in so(3)}$. Therefore, the dynamics takes the form
    \begin{align*}
        \widetilde{\mathcal{D}} = \left\{ \left(p, (\vec{n}, I_\perp \vec{v}, \vec{v}, c\vec{n} - I_{ax}r\vec{v}\times\vec{n}), I_{ax}r, r, 0 \right) \right\},
    \end{align*}
    and the Euler-Lagrange equations for this system are
    \begin{align}
        \label{eq:E-L_frame}
        I_\perp \der\, \vec{v} &= c\vec{n} - I_{ax}r\vec{v}\times\vec{n}, \\
        \nonumber I_{ax}\der\, r &= 0 .
    \end{align}
    In the first equation, as expected, $c\vec{n}$ corresponds to the centrifugal force, and $I_{ax}r\vec{v}\times\vec{n}$ describes the gyroscopic precession enforced by the symmetry axis revolving around the sphere.

\section{Atiyah algebroid with connection and reduced Tulczyjew triple}

In Section \ref{sec:algebroids_and_triples}, we recalled how the framework of the Tulczyjew triple allows us to encode a Lie algebroid structure in one double vector bundle morphism.

In this section, we describe the reduction of the Tulczyjew triple on a principal bundle by the action of the tangent group $\T G$. From such a reduction emerges a particular Lie algebroid -- the Atiyah algebroid. We will show that in the presence of a connection on the principal bundle, the Atiyah algebroid is also trivialisable, and we will present the trivialised double vector bundle morphism encoding it.

\vspace{\spacesmall}

The \textit{Atiyah algebroid} $(A(P), M, [\cdot, \cdot]_{A(P)}, \rho_{A(P)})$ of a principal bundle $P$ is a Lie algebroid built on the vector bundle $\pi_{A(P)} : A(P) := \T P / G \longrightarrow M$ obtained by dividing $P$ by the tangent action of $G$. Its sections may be identified with the $G$-invariant vector fields on $P$, and the Lie bracket $[\cdot, \cdot]_{A(P)}$ is, therefore, induced from the vector field bracket on $P$. The anchor $\rho_{A(P)}$ is induced by $\T \pi_P$. We denote the projection $\T P \rightarrow A(P)$ with $\delta_{A(P)}$.

It is a well-known fact that a principal connection induces a splitting of the anchor $\rho_{A(P)}$ leading to a trivialisation of the Lie algebroid $\imath_{A(P)}: A(P) \simeq \T M \oplus_M \ad P$ (see \cite{atiyah}). However, this direct sum is in the category of Lie algebroids over $M$ only if the connection is flat -- otherwise, it is just a sum of vector bundles. We note that the above decomposition agrees with the trivialisation \eqref{eq:triv_TP} divided by $G$. We denote the trivialised projection onto the quotient with $\quo$.

Starting with the cotangent of a principal bundle and dividing it by the dual of the tangent action of $G$, we get the \textit{dual of the Atiyah algebroid}, which, if there is a connection, in a similar manner may be decomposed as $\imath_{A^*(P)} : A^*(P) \simeq \ad^*P \oplus_M \T^* M$, where $\ad^* P$ denotes the dual of the adjoint bundle (the defining action on $\alg^*$ is $\AdG^*$). The projections are denoted with $\delta_{A^*(P)}$ and $\quoD$.

\subsection{Reduction of trivialised \textsf{TT}\textsl{P}}
    To reduce the canonical flip, we first find the explicit form of a trivialisation $\T A(P) \simeq \T(\T M \oplus_M \adP) \simeq \T\T M \times_M \adP \times_M \adP$ and of the projection from $\imath_{\T\T P}(\T\T P)$ onto it.

    \label{con:TAP_triv}
    By definition, $A(P)$ is the space of orbits of the tangent action of $G$ on $\T P$. It, along with the version with the functor $\T$ applied, may be diagrammed as
    \begin{equation}
        \label{xy:TG_action}
        \xymatrix@C-15pt@R-10pt{
             \T P \times G \ar[rr]^{\T \wp} \ar[d]_{\delta_{A(P)} \circ \pr_1} & & \T P \ar[d]^{\delta_{A(P)}} \\
             A(P) \ar@{=}[rr] & & A(P) \;,
         }
         \quad
         \xymatrix@C-15pt@R-10pt{
             \T\T P \times \T G \ar[rr]^{\T(\T \wp)} \ar[d]_{\T\delta_{A(P)} \circ \pr_1} & & \T\T P \ar[d]^{\T\delta_{A(P)}} \\
             \T A(P) \ar@{=}[rr] & & \T A(P) \; ,
         }
    \end{equation}
    where we understand the tangent action as a map: $\T P \times G \longrightarrow \T P$. The latter diagram expresses the fact that the tangent of the Atiyah algebroid is the space of orbits of the action of $\T G$ on $\T\T P$. The projection onto the quotient takes the form of $\T\delta_{A(P)}$. In the following, we examine a trivialisation of this diagram, which leads to a convenient trivialisation of $\T A(P)$.

    \begin{remark}
        The group structure of the tangent group $\T G$ carries to the trivialisation $G \times \alg$ as a semidirect product.
    \end{remark}

    We start by considering a curve $\R \ni t \longmapsto \gamma_g(t) = g\exp(tW) \in G$ representing a generic vector $(g, W) \in \imath_{\T G}(\T G) = G \times \alg$ and a curve $\R \ni t \longmapsto (\gamma_p(t), \gamma_v(t), \gamma_X(t)) \in \imath_{\T P}(\T P)$ representing a vector $(p, \V, X, Y, Z) \in \imath_{\T\T P}(\T\T P)$, i.e.
    \begin{align*}
        \T P \ni \dot{\gamma}_p(0) &\simeq (p, \T\tau_M(\V), Y) \in P \times_M \T M \times \alg, \\
        \T\T M \ni \dot{\gamma}_v(0) &= \V, \\
        \T \alg \ni \dot{\gamma}_X(0) &= \der \, X + tZ \simeq (X, Z) \in \alg \times \alg.
    \end{align*}
    We now calculate the right action of $(g, W)$ on $(p, \V, X, Y, Z)$, that is
    \begin{align*}
        (p, \V, X, Y, Z) \cdot (g, W) &= \der \, (\gamma_p(t) \cdot \gamma_g(t), \gamma_v(t), \AdG_{\gamma_g(t)^{-1}}(\gamma_X(t))).
    \end{align*}

    \begin{itemize}
        \item The tangent to the $P$-curve:
            \begin{multline*}
                \der \, \gamma_p(t) \cdot \gamma_g(t) = \T_p\wp_g(\dot{\gamma}_p(0)) + \sigma(W)_{pg} \simeq \\ (pg, \T\tau_M(\V), \AdG_{g^{-1}}Y + W) \in P \times_M \T M \times \alg,
            \end{multline*}
        \item and the tangent to the $\alg$-curve:
            \begin{align*}
                \der \, \AdG_{\gamma_g(t)^{-1}}(\gamma_X(t)) &\simeq  (\AdG_{g^{-1}}X, [\AdG_{g^{-1}}X, W]_\alg + \AdG_{g^{-1}}Z) \in \alg \times \alg.
            \end{align*}
    \end{itemize}
    Composing the parts back together (additionally, we omitted the redundant component, and changed the order to fit the convention), we get the ($G \times \alg$)-equivalence between the elements (via respective elements of $G \times \alg$: $(g, W)$, $(e, -Y)$)
    \begin{align}
        \label{eq:TAP_equiv}
        \notag(p, \V, X, Y, Z) &\sim_{(G \times \alg)} \big(pg; \V; \AdG_{g^{-1}}X; \AdG_{g^{-1}}Y + W; [\AdG_{g^{-1}}X, W]_\alg + \AdG_{g^{-1}}Z\big) \\
        &\sim_{(G \times \alg)} \big(p; \V; X; 0; [Y, X]_\alg + Z\big).
    \end{align}

    We now show the foretold one-to-one correspondence between orbits of the $(G \times \alg)$-action on $\imath_{\T\T P}(\T\T P)$ and elements of $\T\T M \times_M \ad P \times_M \ad P$. First, inspired by the form of the last representative in (\ref{eq:TAP_equiv}), we postulate the following projection onto trivialised orbits
    \begin{align*}
        \label{eq:triv_TAP}
        \widetilde{\T\delta}_{A(P)} &: \imath_{\T\T P}(\T\T P) \longrightarrow \T\T M \times_M \ad P \times_M \ad P \\
        &: (p, \V, X, Y, Z) \longmapsto (\V, [(p, X)]_{\ad P}, [(p, [Y, X]_\alg + Z)]_{\ad P}) .
    \end{align*}
    One can easily check that the projection is $(G \times \alg)$-invariant. Therefore, we have the well-defined map
    \begin{align*}
        \mu_{\T A(P)} &: \imath_{\T\T P}(\T\T P)/(G \times \alg) \longrightarrow \T\T M \times_M \ad P \times_M \ad P \\
        &: [(p, \V, X, Y, Z)]_{(G \times \alg)} \longmapsto (\V, [(p, X)]_{\ad P}, [(p, [Y, X]_\alg + Z)]_{\ad P}) .
    \end{align*}
    What is more, it is easy to check that the following is its well-defined inverse
    \begin{align*}
        \jmath_{\T A(P)} &: \T\T M \times_M \ad P \times_M \ad P \longrightarrow \imath_{\T\T P}(\T\T P)/(G \times \alg) \\
        &: (\V, [p, X]_{\ad P}, [p, T]_{\ad P}) \longmapsto [(p, \V, X, 0, T)]_{(G \times \alg)}.
    \end{align*}

    Finally, using the above results and the commutativity of \eqref{xy:TG_action}, we conclude that $\T A(P)$ is indeed trivialisable to $\T\T M \times_M \ad P \times_M \ad P$. We denote this trivialisation by
    \begin{equation}
        \label{eq:triv_TAP}
        \imath_{\T A(P)}: \T A(P) \simeq \T\T M \times_M \ad P \times_M \ad P.
    \end{equation}

    \begin{remark}
        \label{cor:triv_TAP}
        The morphism $\imath_{\T A(P)}$ is a double vector bundle isomorphism. Both vector structures of $\T\T M \times_M \ad P \times_M \ad P$ are inherited from $\imath_{\T\T P}(\T\T P)$ via $\widetilde{\T\delta}_{A(P)}$ and read
        \begin{equation*}\begin{gathered}
            \scalebox{1}{\xymatrix@C-15pt@R-5pt{
                 & \T\T M \times_M \ad P \times_M \ad P \ar[dl]|{(\tau_{\T M} \circ \pr_1, \pr_2)} \ar[dr]|{\T \tau_M \circ \pr_1} & \\
                \T M \times_M \ad P \ar[dr]|{\tau_M \circ \pr_1} & \T M \times_M \ad P \ar[d]^(0.3){\tau_M \circ \pr_1} \ar@{^{(}->}[u] & \T M \ar[dl]|{\tau_M} \\
                 & M & .
            }}
        \end{gathered}\end{equation*}
        This will be the case for all the incoming reductions of iterated tangent and cotangent bundles.
    \end{remark}

\subsection{Reduction of the trivialised canonical flip}
    Having the trivialisation of $\T A(P)$ computed, the idea behind obtaining the reduced canonical flip in trivialisation is fairly simple and diagrammed as
    \begin{equation*}\begin{gathered}
        \scalebox{1}{\xymatrix{
             P \times_M \T\T M \times \alg \times \textcolor{blue}{\alg} \times \textcolor{red}{\alg} \ar[rr]^{\tilde{\kappa}_P} \ar[d]|{\widetilde{\T\delta}_{A(P)}} & & P \times_M \T\T M \times \alg \times \textcolor{blue}{\alg} \times \textcolor{red}{\alg} \ar[d]|{\widetilde{\T\delta}_{A(P)}} \\
             \T\T M \times_M \ad P \times_M \ad P \ar@{-|>}[rr]_{\tilde{\kappa}_{A(P)}} & & \T\T M \times_M \ad P \times_M \ad P \;,
         }}
    \end{gathered}\end{equation*}
    where we only need to be careful whether the reduced canonical flip is a map or just a relation (which we mark with a triangular arrow). Inserting a vector into the top-left entry of the diagram, we calculate that the canonical flip on the Atiyah algebroid (reduced canonical flip on a principal bundle) is the relation
    \begin{gather*}
        \T\T M \times_M \ad P \times_M \ad P \ni (\V, \mathcal{X}, \mathcal{Z}_1) \sim_{\tilde{\kappa}_{A(P)}} (\mathcal{U}, \mathcal{Y}, \mathcal{Z}_2) \in \T\T M \times_M \ad P \times_M \ad P \\
        \iff \\
        \mathcal{U} = \kappa_M(\V) \quad \text{and} \quad \mathcal{Z}_2 = \mathcal{Z}_1 + [\mathcal{X}, \mathcal{Y}]_\adP + [(p, \Omega_p\big(\tau_{\T M}(\V), \T\tau_{M}(\V)\big)]_\adP.
    \end{gather*}

    \begin{remark}
        We note that, upon choosing the base points with respect to the second vector structure $\T\tau_M \circ \pr_1$ on the left and the first vector structure $(\tau_{\T M} \circ \pr_1, \pr_2)$ on the right, $\tilde{\kappa}_{A(P)}$ becomes a morphism of vector spaces.
    \end{remark} 

\subsection{Reduction of trivialised \textsf{TT}*\textsl{P}}
    In order to determine the reduced Tulczyjew isomorphism, we need to first reduce the dual bundles $\T\T^* P$ and $\T^* \T P$.

    The construction of $\imath_{\T A^*(P)}$ is analogical to Section \ref{con:TAP_triv}. $A^*(P)$ is the quotient space by the (dual to the tangent) action of $G$. Therefore, applying the tangent functor, $\T A^*(P)$ is the quotient of $\T\T^* P$ by the action of $\T G$. We need to obtain the explicit form of this action. This time, we start with a curve $\R \ni t \longmapsto (\gamma_p(t), \gamma_\alpha(t), \gamma_A(t)) \in \imath_{\T^* P}(\T^* P)$ representing a vector $(p, \dot{\alpha}, A, Y, B) \in P \times_M \T\T^* M \times \alg^* \times \alg \times \alg^*$ and, once again, a curve $\R \ni t \longmapsto \gamma_g(t) = g\exp(tW) \in G$ representing a generic vector $(g, W) \in G \times \alg$. Therefore,
    \begin{align*}
        (p, \dot{\alpha}, A, Y, B) \cdot (g, W) = \der \, \big(\gamma_p(t) \cdot \gamma_g(t), \gamma_\alpha(t), \AdG_{\gamma_g(t)}^*(\gamma_A(t))\big).
    \end{align*}
    Differentiating (and using the identity $\der \, \AdG^*_{g\exp(tW)}(A + tB) = \AdG^*_g(B) + (\AdG_g \circ \ad_W)^*(A)$), we get different elements in the same orbit of the $(G \times \alg)$-action:
    \begin{align*}
        (p, \dot{\alpha}, A, Y, B) &\sim_{(G \times \alg)} (pg, \dot{\alpha}, \AdG_g^*(A), \AdG_{g^{-1}}(Y) + W, \AdG_g^*(B) + (\AdG_g \circ \ad_W)^*(A)) \\
        &\sim_{(G \times \alg)} (p, \dot{\alpha}, A, 0, B - \ad_Y^*(A)).
    \end{align*}
    Next, we define the $(G \times \alg)$-invariant projection
    \begin{align*}
        \widetilde{\T\delta}_{A^*(P)} &: \imath_{\T\T^* P}(\T\T^* P) \longrightarrow \T\T^* M \times_M \ad^* P \times_M \ad^* P \\
        &: (p, \dot{\alpha}, A, Y, B) \longmapsto (\dot{\alpha}, [(p, A)]_{\ad^* P}, [(p, B - \ad_Y^*(A))]_{\ad^* P}) .
    \end{align*}
    Hence, we have the map
    \begin{align*}
        \mu_{\T A^*(P)} &: \imath_{\T\T^* P}(\T\T^* P)/(G \times \alg) \longrightarrow \T\T^* M \times_M \ad^* P \times_M \ad^* P \\
        &: [(p, \dot{\alpha}, A, Y, B)]_{(G \times \alg)} \longmapsto (\dot{\alpha}, [(p, A)]_{\ad^* P}, [(p, B - \ad_Y^*(A))]_{\ad^* P})
    \end{align*}
    with the inverse
    \begin{align*}
        \jmath_{\T A^*(P)} &: \T\T^* M \times_M \ad^* P \times_M \ad^* P \longrightarrow \imath_{\T\T^* P}(\T\T^* P)/(G \times \alg) \\
        &: (\dot{\alpha}, [(p, A)]_{\ad^* P}, [(p, B)]_{\ad^* P}) \longmapsto [(p, \dot{\alpha}, A, 0, B)]_{(G \times \alg)}.
    \end{align*}
    Therefore, we have obtained a one-to-one correspondence between orbits of the $(G \times \alg)$-action on $\imath_{\T\T^* P}(\T\T^* P)$ and elements of $\T\T^* M \times_M \ad^* P \times_M \ad^* P$, which in turn establishes a trivialisation $\imath_{\T A^*(P)} : \T A^*(P) \simeq \T\T^* M \times_M \ad^* P \times_M \ad^* P$.

\subsection{Reduction of trivialised \textsf{T}*\textsf{T}\textsl{P}}
    This time, we look for a trivialisation in the form $\T^* A(P) \simeq \T^*(\T M \oplus_M \adP) \simeq \T^*\T M \times_M \adP \times_M \ad^*P$.

    First, we take elements of $\T^* \T P$ and notice that, at any $v \in \T P$, it only makes sense to consider covectors having a constant evaluation on all elements of $\T_v\T P$ projecting to the same one in $\T A(P)$, i.e. orbits of the action of $\{e\} \times \alg \subset G \times \alg$. In other words, reducible covectors form the annihilator $(\ker \T \delta_{A(P)})^\circ$ at every point of $\T P$. Therefore, we should find the explicit form of the annihilator in trivialisation, starting with the kernel of $\widetilde{\T\delta}_{A(P)}$. From $\widetilde{\T\delta}_{A(P)}(p, \V, X, Y, Z) = (\V, \mathcal{X}, [(p, [Y, X]_\alg + Z)]_{\ad P})$, it follows that
    \begin{align*}
        \Ker \widetilde{\T\delta}_{A(P)} = \left\{ (p, \V, X, Y, Z) \in \imath_{\T\T P}(\T\T P) \; \mid \; \V = 0 \; \text{and} \; Z = -[Y, X]_\alg \right\}.
    \end{align*}
    Hence, the reducible covectors over $(p, \pi_{\T M}(\omega), X)$ satisfy the condition
    \begin{gather*}
        \langle \omega, 0 \rangle + \langle B, Y \rangle + \langle C, -[Y, X]_\alg \rangle = 0 .
    \end{gather*}
    Therefore, the coisotropic submanifold of reducible trivialised covectors collected over the base takes the form
    \begin{align*}
        \tilde{K} = \big\{ (p, \omega, X, B, C) \in P \times_M \T^*\T M \times \alg \times \alg^* \times \alg^* \; \mid \; B = -\ad_X^*(C) \big\}.
    \end{align*}

    Second, we need to ensure that trivialised elements of $A^*(P)$ are constant also on orbits of the subgroup $G \times \{ 0 \} \subset G \times \alg$. It yields a relation on $\tilde{K}$, which equivalence classes we find using an evaluation on two vectors from the same orbit:
    \begin{gather*}
        \begin{aligned}
            \langle (\omega, B, C), (\V, Y, Z) \rangle_{} &= \langle (\omega', B', C'), (\V, \AdG_{g^{-1}}(Y), \AdG_{g^{-1}}(Z)) \rangle,
        \end{aligned}
    \end{gather*}
    where the left evaluation is at the point $(p, v, X)$ and the right at $(pg, v, \AdG_{g^{-1}}(X))$. Comparing both sides, we get
    \begin{align*}
        (p, \omega, X, -\ad_X^*(C), C) \sim_G \big(pg, \omega, \AdG_{g^{-1}}(X), -(\ad_X \circ \AdG_g)^*(C), \AdG_g^*(C)\big).
    \end{align*}
    We denote the submanifold $\tilde{K}$ divided by the relation by $\tilde{K}/G$ .

    Finally, we propose the projection map (inspired by the fact that the fourth component in the equivalence class can be reassembled from the third and the fifth)
    \begin{align*}
        \widetilde{\T^*\delta}_{A(P)} &: \tilde{K} \longrightarrow \T^*\T M \times_M \ad P \times_M \ad^* P \\
        &: (p, \omega, X, -\ad_X^*(C), C) \longmapsto (\omega, [(p, X)]_{\ad P}, [(p, C)]_{\ad^* P}),
    \end{align*}
    which $G$-invariance is proven by
    \begin{equation*}\begin{gathered}
        \scalebox{0.93}{\xymatrix{
            (p, \omega, X, -\ad_X^*(C), C) \ar@{|->}[r]^(0.4){\cdot g} \ar@{|->}[d]^{\widetilde{\T^*\delta}_{A(P)}} & \big(pg, \omega, \AdG_{g^{-1}}(X), -(\ad_X \circ \AdG_g)^*(C), \AdG_g^*(C)\big) \ar@{|->}[d]^{\widetilde{\T^*\delta}_{A(P)}} \\
            (\omega, [(p, X)]_{\ad P}, [(p, C)]_{\ad^* P}) \ar@{=}[r] & (\omega, [(pg, \AdG_{g^{-1}}(X))]_{\ad P}, [(pg, \AdG_g^*(C))]_{\ad^* P}),
        }}
    \end{gathered}\end{equation*}
    and which induces the well-defined trivialisation
    \begin{align*}
        \mu_{\T^*A(P)} &: \tilde{K}/G \longrightarrow \T^*\T M \times_M \ad P \times_M \ad^* P \\
        &: [(p, \omega, X, -\ad_X^*(C), C)]_G \longmapsto (\omega, [(p, X)]_{\ad P}, [(p, C)]_{\ad^* P})
    \end{align*}
    with the apparent inverse
    \begin{align*}
        \jmath_{\T^* A(P)} &: \T^*\T M \times_M \ad P \times_M \ad^* P \longrightarrow \tilde{K}/G \\
        &: (\omega, [(p, X)]_{\ad P}, [(p, C)]_{\ad^* P}) \longmapsto [(p, \omega, X, -\ad_X^*(C), C)]_G.
    \end{align*}
    As introduced at the beginning, the described construction is directly the trivialised counterpart of the construction of $\T^* A(P)$, and we conclude that there exists a trivialisation $\imath_{\T^* A(P)} : \T^* A(P) \simeq \T^*\T M \times_M \ad P \times_M \ad^* P$.

    \begin{remark}
        Upon taking the appropriately regular value of the momentum map corresponding to the $G$-action, the reduction coincides with the symplectic reduction by the action of $G$. Therefore, there is a unique induced symplectic form on $\T^* A(P)$.
    \end{remark}

\subsection{Reduction of the trivialised Tulczyjew isomorphism}
    Having the trivialisations computed, we can find the form of the reduced Tulczyjew isomorphism in a similar manner to the reduced canonical flip. This time, we use the diagram
    \begin{equation*}\begin{gathered}
        \scalebox{1}{\xymatrix{
            \tilde{K} \ar[rr]^{\tilde{\alpha}^{-1}_P} \save[]+<0cm,1.8pc>*\txt<10pc>{$P \times_M \T^*\T M \times \alg \times {\alg}^* \times {\alg}^*$ \\ $\cup$} \restore \ar[d]|{\widetilde{\T^*\delta}_{A(P)}} & & P \times_M \T\T^* M \times \alg^* \times \alg \times {\alg}^* \ar[d]|{\widetilde{\T\delta}_{A^*(P)}} \\
            \T^*\T M \times_M \ad P \times_M \ad^* P & & \T\T^* M \times_M \ad^* P \times_M \ad^* P \ar@{-|>}[ll]_{\tilde{\alpha}_{A(P)}} \; .
         }}
    \end{gathered}\end{equation*}
    $\tilde{\alpha}_A(P)$ is only a relation, but the reduction of $\tilde{\alpha}_P^{-1}$ happens to be a mapping. Therefore, in the context of the Atiyah algebroid, we use the term {\sl Tulczyjew map} for the reduced inverse of the trivialised Tulczyjew isomorphism on a principal bundle:
    \begin{align*}
        \tilde{\varepsilon}_{A(P)} &: \T^*\T M \times_M \adP \times_M \ad^*P \longrightarrow \T\T^* M \times_M \ad^*P \times_M \ad^*P \\
        &: (\omega, [(p, X)]_{\ad P}, [(p, C)]_{\ad^* P}) \\
        &\quad \longmapsto \left(\alpha_M^{-1}(\omega) \dot{-} \Omega^*_{(p, \omega)}(C), [(p, C)]_{\ad^* P}, [(p, -\ad_X^*(C))]_{\ad^* P}\right).
    \end{align*}

    \begin{remark}
        The above mapping is well-defined because $$\Omega^*_{(p, \omega)}(C) \equiv \Omega\big(\pi_{\T M}(\omega)^\Hor_p, \T\tau_M(\,\cdot\,)^\Hor_p \big)^*(C) = \Omega\big(\pi_{\T M}(\omega)^\Hor_{pg}, \T\tau_M(\,\cdot\,)^\Hor_{pg} \big)^*(\AdG^*_g(C)).$$
    \end{remark}

\subsection{Trivialised Atiyah algebroid via the Tulczyjew mapping}
    As recalled in the introduction, every Lie algebroid structure on a vector bundle $E$ is encoded in a particular double vector bundle morphism $\varepsilon : \T^* E \rightarrow \T E^*$. The structure of the trivialised Atiyah algebroid on $\T M \times_M \ad P$ is encoded in the Tulczyjew map $\widetilde{\varepsilon}_{A(P)}$:
    \begin{equation*}\begin{gathered}
        \scalebox{1}{\xymatrix@C-15pt@R-10pt{
             & \T^*\T M \times_M \ad P \times_M \ad^* P \ar[rr]^{\tilde{\varepsilon}_{A(P)}} \ar[dl]|{(\pi_{\T M} \circ \pr_1, \pr_2)} \ar[dd]|(0.3){(\xi_{\T^* M} \circ \pr_1, \pr_3)} & & \T\T^* M \times_M \ad^* P \times_M \ad^* P  \ar[dl]|{\T \pi_M \circ \pr_1} \ar[dd]|{(\tau_{\T^* M} \circ \pr_1, \pr_2)} & \\
            \T M \times_M \ad P \ar[rr]_(0.75){\tilde{\rho}_{A(P)}} \ar[ddr] & & \T M \ar[ddr] & & \\
             & \T^* M \times_M \ad^* P \ar@{=}[rr] \ar[d] & & \T^* M \times_M \ad^* P \ar[d] \\
             & M \ar@{=}[rr] & & M & ,
        }}
    \end{gathered}\end{equation*}
    where $\tilde{\rho}_{A(P)} := \pr_1$ is the trivialised anchor of the Lie algebroid. Additionally, as the canonical flip relation is the dual of the Tulczyjew map, it also encodes the algebroid structure.

\subsection{Reduction of trivialised \textsf{T}*\textsf{T}*\textsl{P}}
    We are left with the reduction of $\imath_{\T^*\T^* P}(\T^*\T^* P)$. This reduction allows us to also calculate the last part of the Tulczyjew triple on the Atiyah algebroid.

    The construction of the trivialisation map $\imath_{\T^* A^*(P)}$ is fully analogous to the construction of $\imath_{\T^* A(P)}$, and, as such, we only highlight its most important steps. First, the coisotropic submanifold of the reducible covectors is of the form
    \begin{align*}
        \tilde{L} = \big\{ ((p, \psi, A, B, Z)) \in P \times_M \T^*\T^* M \times \alg^* \times \alg^* \times \alg \; \mid \; B = \ad_Z^*(A) \big\}.
    \end{align*}
    Second, due to the equivalence
    \begin{align*}
        (p, \psi, A, \ad_Z^*(A), Z) \sim_G (pg, \psi, \AdG_g^*(A), (\ad_Z \circ \AdG_g)^*(A), \AdG_{g^{-1}}(Z)),
    \end{align*}
    we introduce the $G$-invariant projection
    \begin{align*}
        \widetilde{\T^*\delta}_{A^*(P)} &: \tilde{L} \longrightarrow \T^*\T^* M \times_M \ad^* P \times_M \ad P \\
        &: (p, \psi, A, \ad_Z^*(A), Z) \longmapsto (\psi, [(p, A)]_{\ad^* P}, [(p, Z)]_{\ad P})
    \end{align*}
    and the obviously well-defined and invertible
    \begin{align*}
        \mu_{\T^*A^*(P)} &: \tilde{L}/G \longrightarrow \T^*\T^* M \times_M \ad^* P \times_M \ad P \\
        &: [(p, \psi, A, \ad_Z^*(A), Z)]_G \longmapsto (\psi, [(p, A)]_{\ad^* P}, [(p, Z)]_{\ad P}).
    \end{align*}
    The above yields a trivialisation $\imath_{\T^* A^*(P)} : \T^* A^*(P) \simeq \tilde{L}/G \simeq \T^*\T^* M \times_M \ad^* P \times_M \ad P$.

\subsection{Reduction of the trivialised symplectic-based isomorphism}
    Once again, we obtain the reduced symplectic-based isomorphism of the Tulczyjew triple using the diagrammatic presentation
    \begin{equation*}\begin{gathered}
        \scalebox{1}{\xymatrix{
            P \times_M \T\T^* M \times \alg^* \times \alg \times {\alg}^* \ar[d]|{\widetilde{\T\delta}_{A^*(P)}} & & \tilde{L} \ar[ll]_{\tilde{\beta}^{-1}_P} \save[]+<0cm,1.8pc>*\txt<10pc>{$P \times_M \T^*\T^* M \times \alg^* \times \alg^* \times \alg$ \\ $\cup$} \restore \ar[d]|{\widetilde{\T^*\delta}_{A^*(P)}} \\
            \T\T^* M \times_M \ad^* P \times_M \ad^* P \ar@{-|>}[rr]_{\tilde{\beta}_{A(P)}} & & \T^*\T^* M \times_M \ad^* P \times_M \ad P \; .
         }}
    \end{gathered}\end{equation*}

    This time also, the reduced version of $\tilde{\beta}_P^{-1}$, which we denote by $\tilde{\eta}_{A(P)}$, is a mapping, but the reduced version of the symplectic-based isomorphism is a relation. The mapping takes the form
    \begin{align*}
        \tilde{\eta}_{A(P)} &: \T^*\T^* M \times_M \ad^*P \times_M \adP \longrightarrow \T\T^* M \times_M \ad^*P \times_M \ad^*P \\
        &: (\Theta, [(p, A)]_{\ad^* P}, [(p, Z)]_{\ad P}) \\
        &\quad \longmapsto \left(\beta_M^{-1}(\Theta) \dot{-} \Omega^*_{(p, \Theta)}(A), [(p, A)]_{\ad^* P}, [(p, -\ad_Z^*(A))]_{\ad^* P}\right).
    \end{align*} 

\subsection{Dynamics on the trivialised Atiyah algebroid}
    We summarise the reduction of the Tulczyjew triple on a principal bundle with connection using the diagram of the reduced triple highlighting some of the trivialised double vector structure:
    \begin{equation}\begin{gathered}
        \label{xy:triple_on_A(P)}
        \scalebox{1}{\xymatrix@C-10pt@R-5pt{
            \T^*\T^* M \times_M \ad^* P \times_M \ad P \ar[dr]|{(\pi_{\T^* M} \circ \pr_1, \pr_2)} & & \\
             & \T^* M \times_M \ad^* P \ar[dr]|{\pi_M \circ \pr_1} & \\
            \T\T^* M \times_M \ad^* P \times_M \ad^* P \ar@{-|>}[uu]_{\tilde{\beta}_{A(P)}} \ar@{-|>}[dd]^{\tilde{\alpha}_{A(P)}} \ar[ur]|{(\tau_{\T^* M} \circ \pr_1, \pr_2)} \ar[dr]|{\T\pi_M \circ \pr_1} & & M \\
             & \T M \ar[ur]|{\tau_M} & \\
            \T^*\T M \times_M \ad P \times_M \ad^* P \ar@{-->}[ur]|{\widetilde{\rho}_{A(P)} \circ (\pi_{\T M} \circ \pr_1, \pr_2)} & & .
         }}
    \end{gathered}\end{equation}
    The dashed arrow indicates that the projection is not a canonical projection corresponding to a vector structure, but, in fact, one of the vector projections composed with the anchor of the Atiyah algebroid.

    Now, we consider a mechanical system modelled by the Atiyah algebroid with a trivialised Lagrangian invariant under the tangent action, i.e. $\widetilde{L}(pg, v, \AdG_{g^{-1}}(X)) = \widetilde{L}(p, v, X)$. In such a case, the reduced Lagrangian $\tilde{l} : \T M \oplus_M \adP \rightarrow \R$ may be introduced as $\tilde{l}(v, [(p, X)]_{\adP}) := \widetilde{L}(p, v, X)$. The reduced dynamics $\widetilde{d} := \widetilde{\varepsilon}_{A(P)} \circ \dif \tilde{l} \in \T\T^* M \times_M \ad^*P \times \ad^* P$ may be written as
    \begin{gather*}
        \dif \tilde{l} (v, [(p, X)]_{\adP}) = (\partial_{\T M} \widetilde{L}_V, [(p, X)]_\adP, [(p, \partial_\alg \widetilde{L}_V)]_{\ad^* P}) \in \T^*\T M \times_M \adP \times_M \ad^*P, \\
        \widetilde{d} = \left\{ \left. \begin{gathered}(\varphi, \mathcal{A}, \mathcal{B}) \in \\
        \T\T^* M \times_M \ad^*P \times_M \ad^*P \end{gathered} \right| \begin{aligned} \varphi &= \alpha_M^{-1}(\partial_{\T M} \widetilde{L}_V) \dot{-} \Omega^*_{(p, \partial_{\T M} \widetilde{L}_V)}(\partial_\alg \widetilde{L}_V), \\
        \mathcal{A} &= [(p, \partial_\alg \widetilde{L}_V)]_{\ad^* P}), \; \mathcal{B} = [(p, -\ad_X^*(\partial_\alg \widetilde{L}_V)]_{\ad^* P})) \\ &\text{for some } V \equiv (p, v, X) \in P\times_M \T M \times \alg
        \end{aligned} \right\},
    \end{gather*}
    where $\partial_{\T M} \widetilde{L}_V$ and $\partial_\alg \widetilde{L}_V$ are the same as in \eqref{eq:dynamics}.

    On the other hand, if we have a Hamiltonian invariant under the dual of the tangent action, we can introduce its reduced version $\widetilde{h}(\alpha, [(p, A)]_{\ad^*P}) := \widetilde{H}(p, \alpha, A) \equiv \widetilde{H}(\mathfrak{A})$. The Hamiltonian vector field takes the form
    \begin{align*}
        \widetilde{X}_h(\alpha, [(p, A)]_{\ad^*P}) := \tilde{\eta}_{A(P)} \circ \dif \widetilde{h} (\alpha, [(p, A)]_{\ad^*P}) = \tilde{\eta}_{A(P)} (\partial_{\T^* M} \widetilde{H}_\mathfrak{A}, [(p, A)]_{\ad^*P}, [(p, \partial_{\alg^*} \widetilde{H}_\mathfrak{A})]_\adP),
    \end{align*}
    where $\partial_{\T^* M} \widetilde{H}_\mathfrak{A}$ and $\partial_{\alg^*} \widetilde{H}_\mathfrak{A}$ are, once again, taken from \eqref{eq:dynamics_ham}. Therefore, the dynamics obtained from the Hamiltonian side is as follows:
    \begin{gather*}
        \widetilde{d} = \left\{ \left. \begin{gathered}(\varphi, \mathcal{A}, \mathcal{B}) \in \\
        \T\T^* M \times_M \ad^*P \times_M \ad^*P \end{gathered} \right| \begin{aligned} \varphi &= \beta_M^{-1}(\partial_{\T^* M} \widetilde{H}_\mathfrak{A}) \dot{-} \Omega^*_{(p, \partial_{\T^* M} \widetilde{H}_\mathfrak{A})}(A), \\
        \mathcal{A} &= [(p, A)]_{\ad^* P}, \quad \mathcal{B} = [(p, -\ad_{\partial_{\alg^*} \widetilde{H}_\mathfrak{A}}^*(A)]_{\ad^* P} \\ &\text{for some } \mathfrak{A} \equiv (p, \alpha, A) \in P\times_M \T^* M \times \alg^*
        \end{aligned} \right\}.
    \end{gather*} 

\subsection{Example: Reduction of the triple on the frame bundle of a sphere}
    In this section, we showcase the reduction algorithm on the example of the Tulczyjew triple on the frame bundle $\fr$ calculated in Section \ref{sec:example1}.

    First, by \eqref{eq:triv_TAP}, we state a trivialisation of the Atiyah algebroid of the frame bundle
    \begin{align*}
        \T A(\fr) \simeq \T\T S^2 \times_{S^2} \ad \fr \times_{S^2} \ad \fr.
    \end{align*}
    Next, we notice that, as the algebra $so(2)$ is commutative, the adjoint bundle is trivialisable -- $\ad \fr \simeq S^2 \times so(2)$. Hence, we may further simplify $\T A(\fr)$ as
    \begin{align*}
        \T A(\fr) \simeq \T\T S^2 \times_{S^2} (S^2 \times so(2)) \times_{S^2} (S^2 \times so(2)) \simeq \T\T S^2 \times so(2) \times so(2).
    \end{align*}
    The projection from $\imath_{\T\T \fr}(\T\T\fr)$ onto trivialised orbits of the $(SO(2) \times so(2))$-action takes the simple form
    \begin{align*}
        \mu_{\T A(\fr)}((\vec{n}, f), \V, x, y, z) = \left( \V, x, z \right) \in \T\T S^2 \times so(2) \times so(2).
    \end{align*}

    Similarly, using additionally the fact that $\ad^* \fr \simeq S^2 \times so(2)^*$, we can trivialise also other tangents and cotangents of the Atiyah algebroid.

    Having all final trivialisations, we reduce the triple on the frame bundle. The canonical flip (relation) is given by
    \begin{gather*}
        \T\T S^2 \times so(2) \times so(2) \ni \left( \V, x, y \right) \sim_{\tilde{\kappa}_{A(\fr)}}  \left( \mathcal{U}, u, z \right) \in \T\T S^2 \times so(2) \times so(2) \\
        \iff \\
        \mathcal{U} = \kappa_{S^2}(\V) \quad \text{and} \quad z = y +  \langle \vec{w} \times \vec{v} \mid \vec{n} \rangle, \; \tau_{\T S^2}(\V) = (\vec{n}, \vec{v}), \; \T\tau_{S^2}(\V) = (\vec{n}, \vec{w}),
    \end{gather*}
    the Tulczyjew map by
    \begin{align*}
        \tilde{\varepsilon}_{A(\fr)} &: \T^*\T S^2 \times so(2) \times so(2)^* \longrightarrow \T\T^* S^2 \times so(2)^* \times so(2)^* \\
        &: (\omega, x, c) \longmapsto (\alpha_{S^2}^{-1}(\omega) \dot{-} c \langle \vec{v} \times \vec{n} \mid \pr_3(\cdot) \, \rangle, c, 0 ), \qquad \pi_{\T S^2}(\omega) = (\vec{n}, \vec{v}),
    \end{align*}
    and, finally, the symplectic-based map by
    \begin{align*}
        \tilde{\eta}_{A(\fr)} &: \T^*\T^* S^2 \times so(2)^* \times so(2) \longrightarrow \T\T^* S^2 \times so(2)^* \times so(2)^* \\
        &: (\Theta, a, z) \longmapsto (\beta_{S^2}^{-1}(\Theta) \dot{+} a \langle \vec{v} \times \vec{n} \mid \pr_3(\cdot) \, \rangle, a, 0 ), \qquad \xi_{\T S^2}(\Theta) = (\vec{n}, \vec{v}).
    \end{align*}
    The zeros in the last component are due to commutativity of $so(2)$, which yields $ \forall x \in so(2): \;\; \ad_x^* = 0$.

\subsection{Example: Reduction of the dynamics}
    We observe that the Lagrangian $L(p, (\vec{n}, \vec{v}), r) := \frac12 I_\perp \|\vec{n} \times \vec{v}\|^2 + \frac12 I_{ax}r^2$ is invariant under the tangent action \eqref{eq:tangent_action_SO2}. Therefore, we may introduce its reduced version (remembering that $\ad\fr \simeq S^2 \times so(2)$)
    \begin{align*}
        l(\vec{n}, \vec{v}, r) \equiv l((\vec{n}, \vec{v}), [(p, r)]_{\ad\fr}) := L(p, (\vec{n}, \vec{v}), r), \\
        \dif l (\vec{v}, \vec{n}, r) = ((\vec{n}, \vec{v}, c\vec{n}, I_\perp  \vec{v}), r, I_{ax}r).
    \end{align*}
    The dynamics reads
    \begin{align*}
        \widetilde{d} = \tilde{\varepsilon}_{A(\fr)} \circ \dif l (\T S^2 \times so(2)) = \left\{ \left((\vec{n}, I_\perp \vec{v}, \vec{v}, c\vec{n} - I_{ax}r\vec{v}\times\vec{n}), I_{ax}r, 0 \right) \right\}
    \end{align*}
    and, obviously, leads to the same Euler-Lagrange equations as the unreduced version \eqref{eq:E-L_frame}. 

\section{Acknowledgements}
    Parts of the results presented in this article were used in Kuba Krawczyk's and Paweł Korzeb's BSc theses at the University of Warsaw, Faculty of Physics. Both theses were supervised by Katarzyna Grabowska.

\bigskip

\end{document}